\theoremstyle{plain}
\newtheorem{Th}{Theorem}[section]
\newtheorem{Cor}[Th]{Corollary}
\newtheorem{Lem}[Th]{Lemma}
\newtheorem{Prop}[Th]{Proposition}
\theoremstyle{definition}
\newtheorem{Def}{Definition}[section]
\newtheorem{Ex}{Example}[section]
\theoremstyle{remark}
\newtheorem*{Rem}{Remark}%[section]
\numberwithin{equation}{section}
\newcommand{\DD}{{\mathbb D}}
\newcommand{\ZZ}{{\mathbb Z}}
\newcommand{\VV}{{\mathbb V}}
\newcommand{\bphi}{\boldsymbol{\phi}}
\begin{document}

\title
{Non-commutative double-sided continued fractions}

\author{Adam Doliwa}

\address{A. Doliwa: Faculty of Mathematics and Computer Science, University of Warmia and Mazury in Olsztyn,
ul.~S{\l}oneczna~54, 10-710~Olsztyn, Poland} 
\email{doliwa@matman.uwm.edu.pl}
\urladdr{http://wmii.uwm.edu.pl/~doliwa/}

\date{}
\keywords{non-commutative continued fractions, Mal'cev--Neumann series, quasideterminants, discrete integrable systems}
\subjclass[2010]{11J70, 37K20, 65Q10, 16S85, 16K40}

\begin{abstract}
We study double-sided continued fractions whose coefficients are non-commuting symbols. 
We work within the formal approach of the Mal'cev--Neumann series and free division rings. We start with presenting the analogs of the standard results from the theory of continued fractions, including their (right and left) simple fractions decomposition, the Euler--Minding summation formulas, and the relations between nominators and denominators of the simple fraction decompositions. We also transfer to the non-commutative double-sided setting the standard description of the continued fractions in terms of $2\times 2$ matrices presenting also a weak version of the Serret theorem. The equivalence transformations between the double continued fractions are described, including also the transformation from generic such fractions to their simplest form. Then we give the description of the double-sided continued fractions within the theory of quasideterminants and we present the corresponding version of the $LR$ and $qd$-algorithms. We study also (strictly and ultimately)  periodic  double-sided non-commutative continued fractions and we give the corresponding version of the Euler theorem. Finally we present a weak version of the Galois theorem and we give its relation to the non-commutative KP map, recently studied in the theory of discrete integrable systems.

\end{abstract}
\maketitle

\section{Introduction}
\subsection{Motivation}
Continued fractions theory is one of most celebrated subjects in mathematics studied by Euclide, Euler, Lagrange, Galois, and many others~\cite{Brezinski,FlajoletValleeVardi}. For a contemporary presentation of the theory of continued fractions and of their applications, including their generalizaton to power series fields, see~\cite{Brezinski-CFPA,JonesThron,Khrushchev,LorentzenWaadeland,Schmidt}.

Non-commutative continued fractions in their simplest form have been considered first by Wedderburn~\cite{Wedderburn}. In their one-sided forms they were discussed by Wynn~\cite{Wynn} in connection with non-commutative orthogonal polynomials, the $qd$~algorithm, and many other non-commutative versions of topics known in the theory of classical (commutative) continued fractions. 
The convergence questions for continued fractions in complex Banach algebras  were considered in \cite{Fair,Fair-SIAM,Baumann,Hayden}. The convergence of double-sided non-commutative continued fractions (the type of continued fractions considered in the present paper) was studied in \cite{DenkRiederle}, where one can find also wast bibliography on applications of non-commutative continued fractions in mathematics and physics. 
Convergence of Pad\'{e} approximants in a non-commutative normed algebra by using theorems of convergence of non-commutative continued fractions was investigated in~\cite{Draux}.
Combinatorial aspects of the non-commutative continued fractions are reviewed in \cite{Flajolet}, see also~\cite{Viennot-OP}. 
In \cite{Quasideterminants-GR1,Quasideterminants} the basic theory of non-commutative continued fractions was discussed from the point of view of quasideterminants. 
An iterative process for finding roots of the quadratic equation in Banach space in terms of periodic continuous fractions is given in~\cite{McFarland}. 
Quadratic equation with operator coefficients and their solution in terms of  strictly $2$-periodic continued fractions were considered in \cite{BusbyFair}.

The relation of integrable discrete systems to continued fractions and related numerical methods (Pad\'{e} approximation, orthogonal polynomials, convergence acceleration algorithms, the $qd$ and $LR$~algorithms) is well known, see \cite{PGR} or Section~4 of \cite{IDS} and references therein. An important example in this context is provided by the work of Moser~\cite{Moser} on the finite Toda lattice~\cite{Toda-TL} for which the inverse problem was solved by Stieltjes' continued fractions.  In this article we show close connection of the non-commutative continued fractions to the very basic non-commutative integrable system, called in \cite{Nimmo} the non-Abelian Hirota-Miwa system. Notice that its original commutative bilinear form was introduced by Hirota~\cite{Hirota} under the name of discrete analogue of a generalized Toda equation. We present it in the form of the so called companion to the KP (because of relation to the Kadomtsev--Petviashvilii equation) map~\cite{Doliwa-YB}. 
A solution, in terms of continued fractions, of the initial value problem for discrete-time Toda equation on a half-infinite lattice was constructed in~\cite{Common}. Non-commutative discrete systems in relation to continued fractions were studied also in \cite{DiFrancesco}.

Non-commutative extensions of discrete integrable systems are of growing interest in mathematical physics~\cite{FWN-Capel,BobSur-nc,Nimmo,DoliwaNoumi}. They provide a unifying platform to more thorough understanding of integrable classical, quantum or statistical mechanical lattice systems. Moreover, the non-commutativity makes formulas more rigid and in a better way  reveals structures responsible for the integrability. In certain sense, this resembles the conceptual step in transition from the integrable differential equations to integrable discrete systems. 

Although our research was motivated by applications of the non-commutative continued fractions to integrable systems, we have found it useful to study their general properties, in particular to present the corresponding version of the most pertinent results of the classical theory.

\subsection{Plan of the paper}
The paper is constructed as follows. After defining the non-commutative double-sided continued fractions as in \cite{Flajolet,Viennot-OP} we present the corresponding analogs (which we were not able to find in the literature) of standard results from the theory of finite continued fractions, including their (right and left) simple fractions decomposition, the Euler--Minding summation formulas, and the relations between nominators and denominators of the simple fraction decompositions. In our presentation we follow~\cite{Wedderburn,Wynn} where analogous results have been given on the level of the non-commutative right/left and the simple continued fractions. 
Then, in Section~\ref{sec:CF-2x2} we transfer to the non-commutative double-sided setting the standard description of the continued fractions in terms of $2\times 2$ matrices presenting also the weak version (in the sense of giving only implication in one side) of the Serret theorem. The equivalence transformations between the double continued fractions, understood as equality of all their convergents, are derived in Section~\ref{sec:CF-equiv}. This extends the well known transformation of generalized continued fractions to the non-commutative double-sided setting. In there we present also the transformation from the generic such fraction to the simplest (reduced) form of non-commutative continued fractions of Wedderburn. 

After concluding the first part of the paper with basic definitions and properties of non-commutative continued fractions in Section~\ref{sec:CF-Q} we give their description  using the theory of quasideterminants. We specify to our needs the more general definition of non-commutative continued fractions proposed by Gelfand and Retakh~\cite{Quasideterminants-GR1}. 
In the next Section~\ref{sec:LR-QD} we present the corresponding version of the $LR$ and $qd$-algorithms. We apply the standard ideas of the subject to our non-commutative setting generalizing also the results of Wynn~\cite{Wynn} derived for one-sided continued fractions. This allows to find also, following the commutative case described in \cite{Hirota-1983}, the corresponding non-commutative discrete-time Toda lattice equation.  

Then in Section~\ref{sec:CF-per} we introduce and study (strictly and ultimately)  periodic  double-sided non-commutative continued fractions. In particular, we give the corresponding version of Euler's theorem relating periodic continued fractions and quadratic expressions. The final Section~\ref{sec:Galois} is devoted to a weak version of the Galois theorem which relates two strictly periodic solutions of the same quadratic equation. In doing that we construct, in alternative way to that described in~\cite{DoliwaNoumi} in terms of characteristic series of certain context-free languages, the companion to the non-commutative KP map. 

In the paper we consider continued fractions on the level of Mal'cev--Neumann series \cite{Cohn-alg} with the natural topology, as known for example in the theory of formal languages \cite{Sakarovitch}. An analogous approach to continued fractions in power series fields was described in~\cite{Schmidt}. Therefore we conclude the introduction Section with recalling relevant notions. To justify our motivation we give also necessary information on the companion to the non-commutative KP map~\cite{KNY-A,Doliwa-YB,DoliwaNoumi}.

\subsection{The Mal'cev--Neumann construction and rational series over free group}

The free group $\Gamma = \Gamma (a_1,a_2,\dots)$  with generators $a_1,a_2, \dots$, possibly an infinite list, can be regarded as the set of equivalence classes of finite words in the letters $a_i$ and their formal inverses $a_i^{-1}$, where the words are considered equivalent if one can pass from one to the other by removing (or inserting) consecutive letters of the form $a_ia_i^{-1}$ or $a_i^{-1}a_i$. The group operation is concatenation and the empty word represents the identity element.

By fixing a total order $<$ in $\Gamma$ compatible with its group structure one can consider well ordered subsets $ P \subset \Gamma$, i.e. every non-empty subset of $P$
admits a smallest element. Let $\Bbbk$ be a field (in our case the field of rationals $\mathbb{Q}$) , the set $\Bbbk((\Gamma,<))$ of Mal'cev--Neumann series it is the set of the series in $\Bbbk[[\Gamma]]$ whose supports
are well ordered. One can equip $\Bbbk((\Gamma,<))$ with a $\Bbbk$-algebra structure by defining the (Cauchy) product of two Mal'cev--Neumann  series.
Remarkably~\cite{Malcev,Neumann} this algebra is a division ring (skew field).
Moreover, the space of Mal'cev--Neumann  series is complete in the natural ultra-metric  topology \cite{Neumann} (for simplicity we assume that the field $\Bbbk$ is equipped with discrete topology). Here a sequence converges if and only if the coefficient of every monomial stabilizes

For a finite set $A=\{ a_1, a_2,\dots , a_n\}$ of generators of $\Gamma$, the smallest sub-division ring in $\Bbbk((\Gamma,<))$ containing the group algebra $\Bbbk[\Gamma]$ is called the division ring of fractions (rational expressions). It turns out that the result is independent of the particular order $<$ used in the construction, in the sense that it is isomorphic \cite{Lewin} to the  universal field of fractions
$\Bbbk < \!\! \!\! \! ( \, A > \!\! \!\! )$
called also free division ring or the free skew field~\cite{Cohn}.

\subsection{The companion to the KP map}
\label{sec:comp-KP}
Consider the following linear problem \cite{Nimmo}
\begin{equation} \label{eq:lin-dKP}
\bphi_{(i)} - \bphi_{(j)} =  \bphi U_{ij},  \qquad i \ne j \leq N,
\end{equation}
where $U_{ij}\colon\ZZ^N \to \DD$ are functions defined on $N$-dimensional integer lattice with values in a division ring $\DD$, and the wave function $\bphi\colon \ZZ^N \to \VV(\DD)$ takes values in a right vector space over $\DD$; here the subscripts in brackets denote shifts in corresponding discrete variables, i.e., $f_{(i)}(n_1, \dots , n_i, \dots, n_N) = f(n_1,\dots , n_i + 1, \dots ,n_N)$.

The compatibility conditions of \eqref{eq:lin-dKP} consist of equations 
\begin{equation} \label{eq:alg-comp-U}
U_{ij} + U_{ji} = 0, \qquad  U_{ij} + U_{jk} + U_{ki} = 0, \qquad 
U_{kj}U_{ki(j)} = U_{ki} U_{kj(i)}, \quad i,j,k \quad \text{distinct,}
\end{equation}
called in \cite{Nimmo} non-Abelian Hirota--Miwa system of equations. Indeed, for commutative $\DD$ it is possible to introduce the potential $\tau$ such that
\begin{equation} \label{eq:U-tau}
U_{ij} = \frac{\tau \tau_{(ij)}}{\tau_{(i)} \tau_{(j)}}, \qquad i< j.
\end{equation}
The remaining part of the system
\eqref{eq:alg-comp-U} reduces to Hirota's discrete KP equation \cite{Hirota} (originally called the discrete analogue of a generalized Toda equation)
\begin{equation} \label{eq:H-M}
\tau_{(i)}\tau_{(jk)} - \tau_{(j)}\tau_{(ik)} + \tau_{(k)}\tau_{(ij)} =0,
\qquad 1\leq i< j <k \leq N,
\end{equation}
whose pivotal role in the whole KP hierarchy was discovered by Miwa~\cite{Miwa}; see also~\cite{KNS-rev} for review of its appearance in the theory of solvable systems of statistical and quantum physics.

In order to perform periodic reduction of the non-commutative system \eqref{eq:alg-comp-U} let us fix one (the last one, for example) coordinate $k=n_N$ and define 
\begin{equation}
u_{i,k}(n_1,\dots , n_{N-1}) = U_{N\, i}(n_1, \dots n_{N-1},k), \quad i = 1,\dots , N-1.
\end{equation}
Then equations \eqref{eq:alg-comp-U} transform to the so called non-commutative difference KP hierarchy
\begin{equation} \label{eq:ncKP-1}
u_{j,k}u_{i,k(j)} = u_{i,k} u_{j,k(i)}, \qquad  
u_{i,k(j)} + u_{j,k+1} = u_{j,k(i)} + u_{i,k+1}
\end{equation}
introduced on the commutative level in~\cite{KNY-A}.
The corresponding transformation rule
\begin{equation} \label{eq:KP-u-solved}
u_{i,k(j)} = ( u_{i,k} - u_{j,k})^{-1} u_{i,k} ( u_{i,k+1} - u_{j,k+1}), \qquad i\neq j, 
\end{equation} 
called in \cite{Doliwa-YB} the non-commutative KP map, is multidimensionally consistent~\cite{Dol-Des-red}.

Let us impose periodicity constraint $k\in\ZZ_P$ in the distinguished variable and, following~\cite{KNY-A}, consider the corresponding companion map 
$r \colon (\boldsymbol{x}, \boldsymbol{y}) \mapsto 
(\boldsymbol{x}^\prime, \boldsymbol{y}^\prime)$
defined by solution of the equations
\begin{equation} \label{eq:xy-k}
x_k y_k = x_k^\prime y_k^\prime, \qquad y_k + x_{k+1} = y_k^\prime + x_{k+1}^\prime, \qquad k\in\ZZ_P,
\end{equation}
compare with \eqref{eq:ncKP-1}. 
The map is involutive  $r\circ r = \mathrm{id}$ and, as a consequence of the multidimensional consistency of the KP map, satisfies the braid relations \cite{KNY-A,Doliwa-YB,DoliwaNoumi}
\begin{equation} \label{eq:braid-r}
r_{2} \circ r_{1} \circ r_{2} = r_{1} \circ r_{2} \circ r_{1}, 
\end{equation}
where  $r_1 = r \times \mathrm{id}$ and $r_2 = \mathrm{id}\times r$. 

Already in~\cite{KNY-A}, where the commutative version of the map~\eqref{eq:xy-k} was studied, it was observed that the problem of finding the companion map reduces to the following refactorization problem 
	\begin{equation}
	\begin{split}
\left(  \begin{array}{ccccc}  
x_1 & 0  & \cdots & 0 & 1 \\
1    & x_{2} & 0 & \cdots & 0 \\
0  &	1 & x_3  & \ddots &  \vdots  \\
\vdots & \ddots & \ddots & \ddots &  0   \\
0 & \cdots & 0 & 1 & x_{P} 
\end{array} \right) 
\left(  \begin{array}{ccccc}  
y_1 & 0  & \cdots & 0 & 1 \\
1    & y_{2} & 0 & \cdots & 0 \\
0  &	1 & y_3  & \ddots &  \vdots  \\
\vdots & \ddots & \ddots & \ddots &  0   \\
0 & \cdots & 0 & 1 & y_{P} 
\end{array} \right) 
 =  \hskip2cm \\ \hskip2cm = 
\left(  \begin{array}{ccccc}  
x_1^\prime & 0  & \cdots & 0 & 1 \\
1    & x_{2}^\prime & 0 & \cdots & 0 \\
0  &	1 & x_3^\prime  & \ddots &  \vdots  \\
\vdots & \ddots & \ddots & \ddots &  0   \\
0 & \cdots & 0 & 1 & x_{P}^\prime 
\end{array} \right) 
\left(  \begin{array}{ccccc}  
y_1^\prime & 0  & \cdots & 0 & 1 \\
1    & y_{2}^\prime & 0 & \cdots & 0 \\
0  &	1 & y_3^\prime  & \ddots &  \vdots  \\
\vdots & \ddots & \ddots & \ddots &  0   \\
0 & \cdots & 0 & 1 & y_{P}^\prime 
\end{array} \right) 
\end{split}
\end{equation}	
We will elaborate this observation, which links the problem to the continued fractions in  non-commuting symbols, and especially to the non-commutative analogue of the Galois theorem given in Section~\ref{sec:Galois}.
\begin{Rem}
	Another refactorization problem, called the local Yang--Baxter equation~\cite{Maillet-Nijhoff,Kashaev-Korepanov-Sergeev}, allows to find 
solutions to the Zamolodchikov tetrahedron equation~\cite{Zamolodchikov}. In particular, such solutions connected to the non-Abelian Hirota--Miwa system~\eqref{eq:alg-comp-U}, but without imposing the periodicity constraint have been found  recently in~\cite{Doliwa-Kashaev}. 
\end{Rem}

Ruling out the identity mapping, one can define \cite{NY-RSK,DoliwaNoumi} intermediate functions $h_k$ defined by
\begin{equation} 
\label{eq:y-h}
y_{k}^\prime + h_k^{-1} = y_{k} .
\end{equation}
From equations~\eqref{eq:xy-k} it is clear that they allow to construct the companion map. It is not difficult to show~\cite{DoliwaNoumi} that the functions can be expressed as series
\begin{equation} \label{eq:hk-series}
h_k= y_k^{-1} \sum_{n\geq 0} [y^{-1}x]_k^n = y_k^{-1} \left(1 + y_{k-1}^{-1}  x_k +
y_{k-1}^{-1} y_{k-2}^{-1}  x_{k-1}  x_k + \dots \right),
\end{equation}
where we used the following
notation
\begin{equation*}
[y^{-1}x]_k^0 = 1, \qquad [y^{-1}x]_k^{n+1} = y_{k-1}^{-1}[y^{-1}x]_{k-1}^n x_k \qquad \text{for} \quad n\geq 0.
\end{equation*}

\section{Double-sided non-commutative continued fractions --- basic properties}
\begin{Def}
Given symbols $\{b_0, a_1, b_1, c_1, \dots, a_n, b_n, c_n \}$, the finite double-sided non-commutative continued fraction is the following rational expression
\begin{equation} \label{eq:CF}
C_n = \left[ \begin{array}{cccc}
    & \!\!\!a_1 & \cdots & a_n \!\!\! \\
 \!\!\! b_0 & \!\!\!b_1 & \cdots & b_n \!\!\!\\
    & \!\!\!c_1 & \cdots & c_n \!\!\!
    \end{array} \right] =
b_0 + a_1 ( b_1 +a_2 ( b_2 + \dots + a_{n-1} ( b_{n-1} + a_n b_{n}^{-1}c_n )^{-1} \dots )^{-1} c_2)^{-1} c_1 ,
\end{equation}
considered in the corresponding free division ring.
For the one-sided continued fractions \eqref{eq:CF} with unital $c$-symbols (left-sided continued fractions) we will skip the third row (getting this way the classical Lagrange notation)
\begin{equation} \label{eq:CF-c}
\left[ \begin{array}{cccc}
& \!\!\!a_1 & \cdots & a_n \!\!\! \\
\!\!\! b_0 & \!\!\!b_1 & \cdots & b_n \!\!\!
\end{array} \right] =
b_0 + a_1 ( b_1 +a_2 ( b_2 + \dots + a_{n-1} ( b_{n-1} + a_n b_{n}^{-1})^{-1} \dots )^{-1} )^{-1},
\end{equation}
and similarly, we skip the first row in the case of unital $a$-symbols (right-sided continued fractions)
\begin{equation} \label{eq:CF-a}
\left[ \begin{array}{cccc}
\!\!\! b_0 & \!\!\!b_1 & \cdots & b_n \!\!\!\\
& \!\!\!c_1 & \cdots & c_n \!\!\!
\end{array} \right] =
b_0 +  ( b_1 +( b_2 + \dots +  ( b_{n-1} +  b_{n}^{-1}c_n )^{-1} \dots )^{-1} c_2)^{-1} c_1 .
\end{equation}
When both $a$- and $c$-symbols are unital (simple continued fraction) we use the standard notation
\begin{equation} \label{eq:CF-ac}
\left[ \begin{array}{cccc}
\!\!\! b_0 , & \!\!\!b_1, &\!\! \cdots ,& b_n \!\!\!
\end{array} \right] =
b_0 + ( b_1 +( b_2 + \dots +  ( b_{n-1} +  b_{n}^{-1} )^{-1} \dots )^{-1} )^{-1}.
\end{equation}
\end{Def}
\begin{Rem}
In \cite{Wynn} such continued fractions were denoted by
\begin{equation*}
\mathrm{post} \left[ b_0 + \frac{a_1}{b_1 +} \, \frac{a_2}{b_2 +} \cdots \frac{a_n}{b_n} \right]  = \left[ \begin{array}{cccc}
& \!\!\!a_1 & \cdots & a_n \!\!\! \\
\!\!\! b_0 & \!\!\!b_1 & \cdots & b_n \!\!\!
\end{array} \right], \qquad
\mathrm{pre} \left[ b_0 + \frac{c_1}{b_1 +} \, \frac{c_2}{b_2 +} \cdots \frac{c_n}{b_n} \right]  = \left[ \begin{array}{cccc}
\!\!\! b_0 & \!\!\!b_1 & \cdots & b_n \!\!\!\\
& \!\!\!c_1 & \cdots & c_n \!\!\!
\end{array} \right].
\end{equation*}
The simple non-commutative continued fractions \eqref{eq:CF-ac} were studied first in~\cite{Wedderburn}.
\end{Rem}
Later on in Section~\ref{sec:CF-equiv} we show that also in the non-commutative case double-sided continued fractions can be brought, by a suitable transformation, to the simple form \eqref{eq:CF-ac}. However such transformation is highly non-local, and we prefer to present basic properties of the double-sided continued fractions starting from their initial form~\eqref{eq:CF}.
\begin{Rem}
On the level of Mal'cev--Neumann division ring one can consider infinite continued fractions. Then their finite truncations are usually called their convergents/approximants. 
\end{Rem}
\begin{Prop} \label{prop:CF-AB}
	The continued fraction \eqref{eq:CF} can be brought to the form of a right simple fraction 
	\begin{equation} \label{eq:CF-AB}
C_n = A_n B_n^{-1}
	\end{equation}
	 with the nominator $A_n$ and the denominator $B_n$ calculated from the recurrence
	\begin{align} \label{eq:CF-rec-A}
	A_{k+1} & = A_{k}c_{k+1}^{-1}b_{k+1} + A_{k-1} c_k^{-1}a_{k+1},\\
	\label{eq:CF-rec-B}
	B_{k+1} & = B_{k}c_{k+1}^{-1}b_{k+1} + B_{k-1} c_k^{-1}a_{k+1},
	\end{align}
	with initial conditions 
	\begin{equation} \label{eq:CF-rec-init}
	A_{-1} = 1, \qquad A_0 = b_0, \qquad
	B_{-1} = 0, \qquad B_0 = 1,
	\end{equation}
	where, by definition, $c_0 = 1$.
\end{Prop}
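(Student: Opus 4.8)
The plan is to reduce the whole statement to a single linear-fractional substitution identity and then run an induction on $n$. Since the recurrences \eqref{eq:CF-rec-A}--\eqref{eq:CF-rec-B} modify only the \emph{innermost} data as $n$ grows, I would first strengthen the statement by promoting the innermost coefficient $b_n$ to a free generator $\beta$. Writing
\[
\Phi_n(\beta) = b_0 + a_1(b_1 + \dots + a_{n-1}(b_{n-1} + a_n \beta^{-1} c_n)^{-1} c_{n-1} \dots)^{-1} c_1,
\]
so that $\Phi_n(b_n) = C_n$, I claim the sharper identity
\begin{equation} \label{eq:phi-claim}
\Phi_n(\beta) = \bigl(A_{n-1}c_n^{-1}\beta + A_{n-2}c_{n-1}^{-1}a_n\bigr)\bigl(B_{n-1}c_n^{-1}\beta + B_{n-2}c_{n-1}^{-1}a_n\bigr)^{-1},
\end{equation}
to be read in the free division ring generated by the original symbols together with $\beta$. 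Setting $\beta = b_n$ collapses the two bracketed factors to $A_n$ and $B_n$ by \eqref{eq:CF-rec-A}--\eqref{eq:CF-rec-B}, which yields \eqref{eq:CF-AB}; hence it suffices to prove \eqref{eq:phi-claim}. Treating $\beta$ as a genuine generator (equivalently, working in the Mal'cev--Neumann completion) is what guarantees that all indicated inverses exist and that the manipulations below are legitimate rational-function identities rather than numerical ones.

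The base case $n=1$ is a direct check: with $A_0=b_0$, $A_{-1}=1$, $B_0=1$, $B_{-1}=0$ and $c_0=1$, the right-hand side of \eqref{eq:phi-claim} is $(b_0 c_1^{-1}\beta + a_1)(c_1^{-1}\beta)^{-1} = b_0 + a_1\beta^{-1}c_1 = \Phi_1(\beta)$. For the inductive step I would exploit the nesting relation $\Phi_n(\beta) = \Phi_{n-1}(\gamma)$ with $\gamma = b_{n-1} + a_n\beta^{-1}c_n$, which holds because substituting this $\gamma$ for the innermost block of $\Phi_{n-1}$ reproduces precisely the two innermost levels of $\Phi_n$ while leaving levels $0,\dots,n-2$ untouched. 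Applying the induction hypothesis to $\Phi_{n-1}(\gamma)$ and expanding $A_{n-2}c_{n-1}^{-1}\gamma = A_{n-2}c_{n-1}^{-1}b_{n-1} + A_{n-2}c_{n-1}^{-1}a_n\beta^{-1}c_n$ (and likewise for the $B$'s), the recurrences \eqref{eq:CF-rec-A}--\eqref{eq:CF-rec-B} collapse the $b_{n-1}$-terms into $A_{n-1}$ and $B_{n-1}$, giving
\[
\Phi_n(\beta) = \bigl(A_{n-1} + A_{n-2}c_{n-1}^{-1}a_n\beta^{-1}c_n\bigr)\bigl(B_{n-1} + B_{n-2}c_{n-1}^{-1}a_n\beta^{-1}c_n\bigr)^{-1}.
\]

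The heart of the argument is the elementary but order-sensitive identity $P + Q\beta^{-1}c_n = (P c_n^{-1}\beta + Q)\beta^{-1}c_n$, valid for arbitrary $P,Q$, which I would apply to the numerator (with $P=A_{n-1}$, $Q=A_{n-2}c_{n-1}^{-1}a_n$) and, in the same way, to the denominator. This pulls out a common right factor $\beta^{-1}c_n$, which then cancels in the quotient through $\beta^{-1}c_n\,(\beta^{-1}c_n)^{-1} = \beta^{-1}c_n\,c_n^{-1}\beta = 1$, leaving exactly \eqref{eq:phi-claim}. The step I expect to require the most care is precisely this cancellation: in the non-commutative setting the placement of $c_n^{-1}\beta$ on the left versus $\beta^{-1}c_n$ on the right is essential, and one must verify that the substitution $\Phi_n(\beta)=\Phi_{n-1}(\gamma)$ correctly respects the asymmetric roles of the $a$- and $c$-symbols (the former entering through $A_{n-2}c_{n-1}^{-1}a_n$, the latter through the trailing $c_n$). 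Everything else is a bookkeeping induction, so once the substitution identity and its cancellation are pinned down the proof closes.
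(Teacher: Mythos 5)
Your proof is correct and follows essentially the same route as the paper's: the paper likewise proceeds by induction, obtaining $C_{n+1}$ from $C_n$ via the substitution $b_n \to b_n + a_{n+1} b_{n+1}^{-1} c_{n+1}$ in the simple-fraction form $A_n B_n^{-1}$ (your $\Phi_n(\beta)$ device merely makes this substitution step explicit), and it uses exactly your rearrangement $P + Q\beta^{-1}c = (P c^{-1}\beta + Q)\beta^{-1}c$ followed by cancellation of the common right factor. The only difference is presentational: you keep the innermost coefficient as a free generator throughout, which tidies a justification the paper leaves implicit.
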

\begin{proof}
	\begin{equation*}
	C_1 = b_0 + a_1 b_1^{-1}c_1 = \left( b_0 c_1^{-1}b_1 + a_1 \right) b_1^{-1}c_1,
	\end{equation*}
	which gives
	\begin{align*}
	A_1 & = b_0 c_1^{-1}b_1 + a_1  = A_0 c_1^{-1}b_1 + A_{-1}c_0^{-1}a_1, \\
	B_1 & = c_1^{-1} b_1  = B_0 c_1^{-1}b_1 + B_{-1}c_0^{-1}a_1 .
	\end{align*}
	The continued fraction $C_{n+1}$ is computed from $C_n$ by replacing $b_n$ by $b_n + a_{n+1} b_{n+1}^{-1} c_{n+1}$. Inserting this substitution into the simple fraction expression for $C_n$ we obtain
	\begin{align*}
	C_{n+1} = A_{n} (b_n \to b_n + a_{n+1} b_{n+1}^{-1} c_{n+1}) & B_{n}^{-1} (b_n \to b_n + a_{n+1} b_{n+1}^{-1} c_{n+1}) = \\
	= \left( A_{n-1}c_{n}^{-1} (b_n + a_{n+1} b_{n+1}^{-1} c_{n+1}) + A_{n-2} c_{n-1}^{-1}a_{n}
	\right) & \left( B_{n-1}c_{n}^{-1} (b_n + a_{n+1} b_{n+1}^{-1} c_{n+1}) + B_{n-2} c_{n-1}^{-1}a_{n}
	\right)^{-1} = \\
	 = \left( A_{n-1}c_{n}^{-1} a_{n+1} +  
	 \left( A_{n-1} c_n^{-1} b_n +  A_{n-2} c_{n-1}^{-1}a_{n}\right) 
	 \right.& \left. c_{n+1}^{-1} b_{n+1} \right) \times  \\  
	 \times \left( B_{n-1}c_{n}^{-1}  \right.& \left. a_{n+1}  +  
	 ( B_{n-1} c_n^{-1} b_n +  B_{n-2} c_{n-1}^{-1}a_{n}) c_{n+1}^{-1} b_{n+1} \right)^{-1} = \\
	 = \left( A_{n-1}c_{n}^{-1} a_{n+1} +  
	 A_n c_{n+1}^{-1} b_{n+1} \right) &
	 \left( B_{n-1}c_{n}^{-1} a_{n+1}  +  
	 B_n c_{n+1}^{-1} b_{n+1} \right)^{-1} .
	 	\end{align*}
\end{proof}
\begin{Cor} \label{cor:CF-BA}
	Equivalently, there is a left simple fraction form 
	\begin{equation} \label{eq:CF-BA}
C_n =  \tilde{B}_n^{-1}\tilde{A}_n
	\end{equation} 
	of the continued fraction \eqref{eq:CF}, which can be calculated from the recurrence
	\begin{align} \label{eq:CF-rec-A-rev}
	\tilde{A}_{k+1} & = b_{k+1} a_{k+1}^{-1} \tilde{A}_{k} + c_{k+1} a_k^{-1} \tilde{A}_{k-1} ,\\
	\label{eq:CF-rec-B-rev}
	\tilde{B}_{k+1} & =  b_{k+1} a_{k+1}^{-1} \tilde{B}_{k} + c_{k+1} a_k^{-1} \tilde{B}_{k-1} ,
	\end{align}
	with initial conditions \eqref{eq:CF-rec-init}, and with $a_0 = 1$.
\end{Cor}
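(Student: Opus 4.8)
The plan is to deduce the left simple fraction form directly from Proposition~\ref{prop:CF-AB} by exploiting the reflection symmetry of \eqref{eq:CF}, rather than redoing the induction from scratch. Let $\tau$ denote the transpose anti-automorphism of the underlying free division ring, i.e. the map fixing every generator $a_i,b_i,c_i$ and reversing products, $\tau(uv)=\tau(v)\tau(u)$ and hence $\tau(u^{-1})=\tau(u)^{-1}$. Reading the nested expression \eqref{eq:CF} from the innermost term $a_n b_n^{-1}c_n$ outwards and applying $\tau$ termwise, one sees that $\tau(C_n)$ is again a continued fraction of the shape \eqref{eq:CF}, but with the roles of the $a$- and $c$-symbols interchanged. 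Thus $\tau$ carries the problem to the one already solved, and it remains only to transport the recurrences \eqref{eq:CF-rec-A}--\eqref{eq:CF-rec-B} back through $\tau$.

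First I would record the identity $\tau(C_n)=C_n|_{a\leftrightarrow c}$ cleanly, by introducing the tail fractions $D_k=b_k+a_{k+1}D_{k+1}^{-1}c_{k+1}$ with $D_n=b_n$ and $C_n=D_0$, and checking by downward induction that $\tau(D_k)=b_k+c_{k+1}\tau(D_{k+1})^{-1}a_{k+1}$, which is exactly the swapped recursion. Applying Proposition~\ref{prop:CF-AB} to the swapped fraction gives $\tau(C_n)=A_n'(B_n')^{-1}$, where $A_n',B_n'$ satisfy \eqref{eq:CF-rec-A}--\eqref{eq:CF-rec-B} with $a$ and $c$ transposed (the convention $c_0=1$ becoming $a_0=1$). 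Applying $\tau$ once more and using $\tau\circ\tau=\mathrm{id}$ together with $\tau(XY^{-1})=\tau(Y)^{-1}\tau(X)$ yields $C_n=\tau(B_n')^{-1}\tau(A_n')$, so that $\tilde{A}_n=\tau(A_n')$ and $\tilde{B}_n=\tau(B_n')$ furnish \eqref{eq:CF-BA}. Finally, applying $\tau$ to the transposed recurrences reverses each monomial, turning $A_k'a_{k+1}^{-1}b_{k+1}+A_{k-1}'a_k^{-1}c_{k+1}$ into $b_{k+1}a_{k+1}^{-1}\tilde{A}_k+c_{k+1}a_k^{-1}\tilde{A}_{k-1}$, which is precisely \eqref{eq:CF-rec-A-rev}; the denominator recurrence \eqref{eq:CF-rec-B-rev} and the initial data \eqref{eq:CF-rec-init} follow identically.

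The step I expect to require the most care is the very first one: justifying that $\tau$ is well defined on the relevant division ring. On the free non-commutative algebra generated by the symbols this is immediate from word reversal, but its extension to the free skew field is cleanest through the universal property of the universal field of fractions, under which the anti-automorphism $R\to R^{\mathrm{op}}$ extends uniquely to $U(R)\to U(R)^{\mathrm{op}}$. The order-dependent Mal'cev--Neumann model should be avoided at this point, since word reversal need not respect the chosen order $<$.

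If one prefers to bypass the anti-automorphism altogether, the corollary also admits a direct proof by induction on $n$ mirroring that of Proposition~\ref{prop:CF-AB}, the only change being that the substitution $b_n\mapsto b_n+a_{n+1}b_{n+1}^{-1}c_{n+1}$ (which again produces $C_{n+1}$ from $C_n$) must now be factored to the left. The crux is the identity
\begin{equation*}
\tilde{A}_n\big(b_n\mapsto b_n+a_{n+1}b_{n+1}^{-1}c_{n+1}\big)=a_{n+1}b_{n+1}^{-1}\tilde{A}_{n+1},
\end{equation*}
obtained by inserting the substitution into $\tilde{A}_n=b_na_n^{-1}\tilde{A}_{n-1}+c_na_{n-1}^{-1}\tilde{A}_{n-2}$ and regrouping via \eqref{eq:CF-rec-A-rev}, together with the analogous identity for $\tilde{B}_n$; the common left factor $a_{n+1}b_{n+1}^{-1}$ then cancels in $C_{n+1}=\tilde{B}_n(\cdots)^{-1}\tilde{A}_n(\cdots)$, leaving $\tilde{B}_{n+1}^{-1}\tilde{A}_{n+1}$.
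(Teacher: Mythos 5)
Your proof is correct and matches the paper's (implicit) approach: the paper offers no written proof of Corollary~\ref{cor:CF-BA} at all, the word ``Equivalently'' signalling exactly the left/right mirror symmetry of \eqref{eq:CF} that your transpose anti-automorphism $\tau$ makes precise, and your transported recurrences, initial data, and the convention $a_0=1$ (replacing $c_0=1$) all come out right. Your care in defining $\tau$ on the free skew field via the universal property rather than on an order-dependent Mal'cev--Neumann model is well placed for the finite fractions at issue here, and your backup direct induction also checks out --- in particular the crux identity $\tilde{A}_n\bigl(b_n\mapsto b_n+a_{n+1}b_{n+1}^{-1}c_{n+1}\bigr)=a_{n+1}b_{n+1}^{-1}\tilde{A}_{n+1}$ follows correctly from \eqref{eq:CF-rec-A-rev}, with the common left factor $a_{n+1}b_{n+1}^{-1}$ cancelling as you claim.
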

\begin{Rem}
	As one can see, the right simple fraction form \eqref{eq:CF-AB} is suited well to one-sided continued fraction with unital $c$-coefficients (the nominators and denominators are then polynomials in the $a$- and $b$- coefficients) while left simple fraction form \eqref{eq:CF-BA} is suited well for unital $a$-coefficients. We will show in Section~\ref{sec:CF-equiv} that one can freely transfer between various forms of the non-commutative continued fractions, and therefore to keep the freedom we study their most general double-sided form.
\end{Rem}
The next property, which we study along lines described in \cite{Wynn}, is the double-sided analogue of the Euler--Minding theorem, which allows to replace a given continued fraction by the corresponding series.
\begin{Prop}
	The successive convergents of the continued fraction \eqref{eq:CF} are equal to the partial sums
	\begin{equation} \label{eq:CF-exp}
	C_n = b_0 + \sum_{k=1}^n (-1)^{k-1} (a_1 b_1^{-1} a_2 B_2^{-1})(B_1 c_2^{-1} a_3 B_3^{-1}) \dots
	(B_{k-3} c_{k-2}^{-1} a_{k-1} B_{k-1}^{-1})(B_{k-2} c_{k-1}^{-1} a_k B_k^{-1}) .
	\end{equation}
\end{Prop}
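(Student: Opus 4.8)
The plan is to prove the Euler--Minding formula \eqref{eq:CF-exp} by establishing a telescoping identity for the differences of successive convergents, $C_k - C_{k-1}$, and then summing. The starting point is the right simple fraction form $C_n = A_n B_n^{-1}$ from Proposition~\ref{prop:CF-AB}, so that $C_k - C_{k-1} = A_k B_k^{-1} - A_{k-1} B_{k-1}^{-1}$. The key object is the ``determinant-like'' quantity $D_k = A_k B_{k-1}^{-1} - A_{k-1} B_k^{-1}$ (or more precisely the appropriately bracketed non-commutative analogue thereof), which I expect to satisfy a simple one-term recurrence that can be solved in closed form from the initial data \eqref{eq:CF-rec-init}.

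First I would compute $C_k - C_{k-1}$ and rewrite it, using the recurrences \eqref{eq:CF-rec-A}--\eqref{eq:CF-rec-B}, in the factored form
\begin{equation*}
C_k - C_{k-1} = (\text{left factor involving } A_{k-1}, B_{k-1}) \, c_{k-1}^{-1} a_k \, (\text{right factor involving } B_k^{-1}).
\end{equation*}
In the commutative theory the analogous step produces $C_k - C_{k-1} = (-1)^{k-1}\frac{a_1 a_2 \cdots a_k}{B_{k-1} B_k}$, and the non-commutative version of this ``cross-ratio'' identity is exactly the general term appearing under the sum in \eqref{eq:CF-exp}. I would therefore prove by induction on $k$ that
\begin{equation*}
A_k B_{k-1}^{-1} - A_{k-1} B_k^{-1} \quad\text{(suitably bracketed)} \quad = (-1)^{k-1}(a_1 b_1^{-1} a_2 B_2^{-1})(B_1 c_2^{-1} a_3 B_3^{-1}) \cdots (B_{k-2} c_{k-1}^{-1} a_k B_k^{-1}),
\end{equation*}
feeding the substitution $b_n \to b_n + a_{n+1} b_{n+1}^{-1} c_{n+1}$ and the recurrences into the inductive step, exactly in the spirit of the proof of Proposition~\ref{prop:CF-AB}.

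The base case $C_1 - C_0 = b_0 + a_1 b_1^{-1} c_1 - b_0 = a_1 b_1^{-1} c_1$ should be checked directly against the $k=1$ term $a_1 b_1^{-1} a_2 B_2^{-1}\cdots$ after accounting for the $B$-denominators via \eqref{eq:CF-rec-init}; one must be careful that $B_1 = c_1^{-1} b_1$ so the telescoped denominators collapse correctly. Summing $C_n = C_0 + \sum_{k=1}^n (C_k - C_{k-1})$ then yields \eqref{eq:CF-exp} with $C_0 = b_0$.

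\textbf{The main obstacle} is purely the non-commutative bookkeeping: in the commutative case one freely cancels the numerator products against the denominator products, but here the factors $B_j$ and $B_j^{-1}$ must be interleaved in precisely the order displayed in \eqref{eq:CF-exp}, with each consecutive pair $(B_{j-1} c_j^{-1} a_{j+1} B_{j+1}^{-1})$ sharing a $B_j$ that does \emph{not} cancel. The delicate point is to verify that the one-term recurrence for the cross difference carries the left/right placement of the $c_j^{-1} a_{j+1}$ insertions consistently, so that the telescoping of $C_k - C_{k-1}$ reproduces exactly the stated product and not a reordered variant. Getting the signs, the inverses, and the bracketing of the $B$'s to line up is the entire content of the verification; once the recurrence for the cross difference is correctly stated, the induction is routine.
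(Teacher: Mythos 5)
Your proposal takes essentially the same route as the paper: the paper also telescopes $C_n = C_0 + \sum_{k=1}^n (C_k - C_{k-1})$, deriving the required one-term recurrence by eliminating $b_{k+1}$ from \eqref{eq:CF-rec-A}--\eqref{eq:CF-rec-B} to obtain $A_k^{-1}A_{k+1} - B_k^{-1}B_{k+1} = (A_k^{-1}A_{k-1} - B_k^{-1}B_{k-1})\,c_k^{-1}a_{k+1}$, whence $C_{k+1}-C_k = -(C_k - C_{k-1})\,B_{k-1}c_k^{-1}a_{k+1}B_{k+1}^{-1}$, which solves to the stated product with base case $C_1 - C_0 = a_1 b_1^{-1} c_1 = a_1 B_1^{-1}$. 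The only refinement worth noting is that the correctly bracketed ``cross difference'' turns out to be $(C_k - C_{k-1})B_{k-1}$ rather than your literal $A_k B_{k-1}^{-1} - A_{k-1}B_k^{-1}$ --- precisely the bookkeeping point you yourself flagged as the delicate step.
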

\begin{proof}
	Eliminating $b_{k+1}$ from equations \eqref{eq:CF-rec-A}-\eqref{eq:CF-rec-B} we have
	\begin{equation} \label{eq:CF-AA-BB}
	A_{k}^{-1}A_{k+1} - B_k^{-1} B_{k+1} = (A_{k}^{-1}A_{k-1} - B_k^{-1} B_{k-1}) c_k^{-1} a_{k+1},
	\end{equation}
	which gives the following recurrence between successive convergents of the continued fraction
	\begin{align*}
	C_{k+1} - C_k & = - (C_k - C_{k-1}) b_{k-1} c_k^{-1} a_{k+1} B_{k+1}^{-1}  \\
	& = 
	(-1)^k (a_1 b_1^{-1} a_2 B_2^{-1})(B_1 c_2^{-1} a_3 B_3^{-1}) \dots
	(B_{k-1} c_k^{-1} a_{k+1} B_{k+1}^{-1}).
	\end{align*}
\end{proof}
\begin{Cor}
	By Corollary~\ref{cor:CF-BA} there exists an analogous sum with left simple-fraction denominators
	\begin{equation} \label{eq:CF-exp-rev}
	C_n = b_0 + \sum_{k=1}^n (-1)^{k-1} 
	(\tilde{B}_{k}^{-1} c_{k} a_{k-1}^{-1} \tilde{B}_{k-2}) 
	(\tilde{B}_{k-1}^{-1} c_{k-1} a_{k-2}^{-1} \tilde{B}_{k-3}) 
	 \dots (\tilde{B}_{3}^{-1} c_{3} a_{2}^{-1} \tilde{B}_{1}) 
	 (\tilde{B}_{2}^{-1} c_2 b_1^{-1}c_1).
		\end{equation}
\end{Cor}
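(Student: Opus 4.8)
The plan is to run the proof of the preceding Proposition essentially verbatim, but on the opposite side, using the left simple fraction $C_n = \tilde{B}_n^{-1}\tilde{A}_n$ of Corollary~\ref{cor:CF-BA} together with its recurrences \eqref{eq:CF-rec-A-rev}--\eqref{eq:CF-rec-B-rev}. Where the right-handed argument isolated the common \emph{right} factor $c_{k+1}^{-1}b_{k+1}$ by left-multiplying $A_{k+1}$ (resp.\ $B_{k+1}$) by $A_k^{-1}$ (resp.\ $B_k^{-1}$) and subtracting, here I would isolate the common \emph{left} factor $b_{k+1}a_{k+1}^{-1}$ by right-multiplying $\tilde{A}_{k+1}$ by $\tilde{A}_k^{-1}$ and $\tilde{B}_{k+1}$ by $\tilde{B}_k^{-1}$ and subtracting. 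This produces the mirror of \eqref{eq:CF-AA-BB},
\begin{equation*}
\tilde{A}_{k+1}\tilde{A}_k^{-1} - \tilde{B}_{k+1}\tilde{B}_k^{-1} = c_{k+1}a_k^{-1}\left(\tilde{A}_{k-1}\tilde{A}_k^{-1} - \tilde{B}_{k-1}\tilde{B}_k^{-1}\right).
\end{equation*}

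Next I would rewrite the difference of consecutive convergents. From $C_n = \tilde{B}_n^{-1}\tilde{A}_n$ one has the left-handed factorization
\begin{equation*}
C_{k+1} - C_k = \tilde{B}_{k+1}^{-1}\tilde{A}_{k+1} - \tilde{B}_k^{-1}\tilde{A}_k = \tilde{B}_{k+1}^{-1}\left(\tilde{A}_{k+1}\tilde{A}_k^{-1} - \tilde{B}_{k+1}\tilde{B}_k^{-1}\right)\tilde{A}_k,
\end{equation*}
which is the transpose of the identity bridging \eqref{eq:CF-AA-BB} to the convergent differences in the previous proof. Feeding the recurrence above into this expression and iterating it down to the base cases turns $C_{k+1}-C_k$ into a signed product whose factors are precisely $\tilde{B}_j^{-1} c_j a_{j-1}^{-1}\tilde{B}_{j-2}$, read from the high index downwards and terminating with $\tilde{B}_2^{-1} c_2 b_1^{-1} c_1$; the base value $C_1 - C_0 = a_1 b_1^{-1} c_1$, read off from \eqref{eq:CF} at $n=1$, fixes both this terminal factor and the overall sign $(-1)^{k-1}$. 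Telescoping $C_n = b_0 + \sum_{k=1}^n (C_k - C_{k-1})$ then yields \eqref{eq:CF-exp-rev}.

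Conceptually all of this is forced, and I would record the reason as a sanity check: the map $\rho$ that reverses words in the free group and simultaneously interchanges $a_j \leftrightarrow c_j$ while fixing the $b_j$ is an anti-automorphism of the ambient division ring under which the continued fraction \eqref{eq:CF} is invariant. Applying $\rho$ to \eqref{eq:CF-rec-A}--\eqref{eq:CF-rec-B} reproduces \eqref{eq:CF-rec-A-rev}--\eqref{eq:CF-rec-B-rev}, so that $\tilde{A}_n = \rho(A_n)$ and $\tilde{B}_n = \rho(B_n)$; applying $\rho$ to the right-handed expansion \eqref{eq:CF-exp} reverses each product as well as the order of the factors and sends each factor $B_{j-2}c_{j-1}^{-1}a_j B_j^{-1}$ to $\tilde{B}_j^{-1}c_j a_{j-1}^{-1}\tilde{B}_{j-2}$, which is exactly \eqref{eq:CF-exp-rev}. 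Either route reduces the statement to the already established expansion.

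I expect the only genuine obstacle to be non-commutative bookkeeping. Since elements of a division ring can be neither cancelled nor reordered freely, I must be scrupulous about the side on which each inverse sits throughout the iteration, and I must verify the small-index cases ($k=1,2$) by hand rather than trusting the generic product pattern, which degenerates there. In the anti-automorphism route the analogous point to check is that $\rho$ is well defined as an anti-automorphism of the rational closure: reversing words reverses the order $<$ on $\Gamma$, but because the division ring of fractions is independent of that order, as recalled above, this causes no difficulty.
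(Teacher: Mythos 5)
Correct, and essentially the paper's own route: the paper states this corollary without a separate proof, the intended argument being exactly your mirrored Euler--Minding computation --- the identity $\tilde{A}_{k+1}\tilde{A}_k^{-1}-\tilde{B}_{k+1}\tilde{B}_k^{-1}=c_{k+1}a_k^{-1}\bigl(\tilde{A}_{k-1}\tilde{A}_k^{-1}-\tilde{B}_{k-1}\tilde{B}_k^{-1}\bigr)$ obtained from \eqref{eq:CF-rec-A-rev}--\eqref{eq:CF-rec-B-rev}, the resulting recurrence $C_{k+1}-C_k=-\tilde{B}_{k+1}^{-1}c_{k+1}a_k^{-1}\tilde{B}_{k-1}(C_k-C_{k-1})$, the base case $C_1-C_0=a_1b_1^{-1}c_1$ absorbing into the terminal factor $\tilde{B}_2^{-1}c_2b_1^{-1}c_1$, and telescoping. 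Your anti-automorphism $\rho$ (word reversal combined with $a_j\leftrightarrow c_j$) is a tidy formalization of the symmetry the paper invokes implicitly via Corollary~\ref{cor:CF-BA}, and your justification that $\rho$ acts on the division ring of fractions, using its independence of the order on $\Gamma$, is sound.
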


\begin{Ex}
	For $n=2$ we have the continued fraction 
	\begin{equation*}
	C_2  = b_0 + a_1 ( b_1 + a_2 b_2^{-1}c_2)^{-1}c_1,
	\end{equation*}
	with nominator and denominator of the corresponding right simple fraction of the form
	\begin{align*}
	A_2 & = (b_0 c_1^{-1} b_1 + a_1) c_2^{-1} b_2 + b_0 c_1^{-1} a_2 ,\\
	B_2 & = c_1^{-1} b_1 c_2^{-1} b_2 + c_1^{-1} a_2 .
	\end{align*}
	The nominator and denominator of the left simple fraction are given by
	\begin{align*}
	\tilde{A}_2 & =b_2 a_2^{-1} (b_1 a_1^{-1} b_0 + c_1)   + c_2 a_1^{-1} b_0 ,\\
	\tilde{B}_2 & =b_2 a_2^{-1} b_1 a_1^{-1} + c_2 a_1^{-1}  .
	\end{align*}
	As the  sum, whose truncations  give subsequent convergents, it reads
	\begin{equation*}
	C_2 = b_0 + a_1 b_1^{-1}c_1 - a_1 b_1^{-1}a_2 (c_1^{-1} b_1 c_2^{-1} b_2 + c_1^{-1} a_2)^{-1}.
	\end{equation*}
\end{Ex}
We close presentation of the simplest properties of non-commutative double-sided fractions by giving relations between the nominators and denominators of their simple fraction decompositions. We follow the line of the corresponding research in~\cite{Wedderburn} applied there to simple (with both $a$- and $c$-coefficients unital) continued fractions. Notice symmetry in their formulation due to presence of both coefficients.
\begin{Cor}
	Apart from the obvious Ore-type relation
	\begin{equation} \label{eq:CF-Ore}
	\tilde{B}_n A_n = \tilde{A}_n B_n,
	\end{equation}
	 between the nominators and denominators of the two different decompositions of the continued fraction in simple fractions, 	
	the following other relations hold
	\begin{align} \label{eq:CF-Delta}
	\tilde{B}_{n-1} A_n -\tilde{A}_{n-1} B_n = (-1)^{n-1}a_n, \qquad &
	\tilde{B}_n A_{n-1} - \tilde{A}_n B_{n-1}= (-1)^n c_n, \\ \label{eq:CF-Delta-2}
    A_n a_n^{-1} \tilde{B}_{n-1} - A_{n-1} c_n^{-1} \tilde{B}_n = (-1)^{n+1}, \qquad	&
    B_n a_n^{-1} \tilde{A}_{n-1} - B_{n-1} c_n^{-1} \tilde{A}_n = (-1)^{n},\\
    \label{eq:CF-Delta-3}
    A_n a_n^{-1} \tilde{A}_{n-1} = A_{n-1} c_n^{-1} \tilde{A}_n, \qquad &
    B_n a_n^{-1} \tilde{B}_{n-1} = B_{n-1} c_n^{-1} \tilde{B}_n.
	\end{align}
\end{Cor}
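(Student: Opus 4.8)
The plan is to package the two three-term recurrences of Proposition~\ref{prop:CF-AB} and Corollary~\ref{cor:CF-BA} into $2\times 2$ matrix recurrences over $\DD$ and to extract all the stated identities from a single matrix product. The Ore-type relation \eqref{eq:CF-Ore} is immediate and needs no induction: since $A_nB_n^{-1}=C_n=\tilde B_n^{-1}\tilde A_n$, multiplying on the left by $\tilde B_n$ and on the right by $B_n$ gives $\tilde B_nA_n=\tilde A_nB_n$. For the rest I would first observe that, because the coefficients in \eqref{eq:CF-rec-A}--\eqref{eq:CF-rec-B} act on the right while those in \eqref{eq:CF-rec-A-rev}--\eqref{eq:CF-rec-B-rev} act on the left, the recurrences are encoded as $\mathcal{M}_k=\mathcal{M}_{k-1}R_k$ and $\mathcal{N}_k=L_k\mathcal{N}_{k-1}$, where
\[
\mathcal{M}_n=\begin{pmatrix} B_n & B_{n-1}\\ A_n & A_{n-1}\end{pmatrix},\quad
\mathcal{N}_n=\begin{pmatrix} \tilde A_n & -\tilde B_n\\ \tilde A_{n-1} & -\tilde B_{n-1}\end{pmatrix},\quad
R_k=\begin{pmatrix} c_k^{-1}b_k & 1\\ c_{k-1}^{-1}a_k & 0\end{pmatrix},\quad
L_k=\begin{pmatrix} b_ka_k^{-1} & c_ka_{k-1}^{-1}\\ 1 & 0\end{pmatrix},
\]
with the initial data $\mathcal{M}_0=\left(\begin{smallmatrix}1&0\\ b_0&1\end{smallmatrix}\right)$, $\mathcal{N}_0=\left(\begin{smallmatrix}b_0&-1\\ 1&0\end{smallmatrix}\right)$ read off from \eqref{eq:CF-rec-init} together with $a_0=c_0=1$.

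Next I would introduce the mixed product $G_n=\mathcal{N}_n\mathcal{M}_n$. The two matrix recurrences at once give $G_{n+1}=L_{n+1}G_nR_{n+1}$, so a single induction on $n$ suffices. The base case is the direct check $G_0=\left(\begin{smallmatrix}0&-1\\ 1&0\end{smallmatrix}\right)$, and in the inductive step one verifies, by one $2\times 2$ multiplication in which the cross terms collapse through cancellations of adjacent inverse pairs such as $a_k^{-1}a_k=1$ and $c_kc_k^{-1}=1$, that $G_n$ preserves the antidiagonal shape
\[
G_n=\begin{pmatrix} 0 & (-1)^{n+1}c_n\\ (-1)^n a_n & 0\end{pmatrix}=:D_n.
\]
Comparing the four entries of $G_n=\mathcal{N}_n\mathcal{M}_n$ with those of $D_n$ gives the two diagonal identities $\tilde A_nB_n-\tilde B_nA_n=0$ (reconfirming \eqref{eq:CF-Ore}) and the two antidiagonal identities that are precisely \eqref{eq:CF-Delta}.

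For the ``reversed'' identities \eqref{eq:CF-Delta-2}--\eqref{eq:CF-Delta-3} I would turn the product around. Each factor $R_k$, $L_k$ is invertible (lower-right entry zero, the remaining off-diagonal entry invertible), and $\mathcal{M}_0$ is invertible, so $\mathcal{M}_n=\mathcal{M}_0R_1\cdots R_n$ is invertible; likewise $D_n$ is invertible with $D_n^{-1}=\left(\begin{smallmatrix}0&(-1)^na_n^{-1}\\ (-1)^{n+1}c_n^{-1}&0\end{smallmatrix}\right)$. From $\mathcal{N}_n\mathcal{M}_n=D_n$ one gets $\mathcal{N}_n=D_n\mathcal{M}_n^{-1}$ and hence the companion identity
\[
\mathcal{M}_nD_n^{-1}\mathcal{N}_n=\mathcal{M}_nD_n^{-1}D_n\mathcal{M}_n^{-1}=I.
\]
Forming $\mathcal{M}_nD_n^{-1}$ and then multiplying by $\mathcal{N}_n$, the two diagonal entries being $1$ yield \eqref{eq:CF-Delta-2} and the two off-diagonal entries being $0$ yield \eqref{eq:CF-Delta-3}, with all signs and the placement of $a_n^{-1}$, $c_n^{-1}$ produced automatically.

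The only genuinely non-commutative point—the step I would treat most carefully—is the passage from $\mathcal{N}_n\mathcal{M}_n=D_n$ to $\mathcal{M}_nD_n^{-1}\mathcal{N}_n=I$, since over a non-commutative ring $XY=D$ does not by itself constrain $YX$. Here it is legitimate because $\mathcal{M}_n$ is honestly invertible in $M_2(\DD)$ (equivalently, because $M_2(\DD)$ over a division ring is von Neumann finite, so the left inverse $D_n^{-1}\mathcal{N}_n$ of $\mathcal{M}_n$ is automatically a right inverse). Everything else is routine $2\times 2$ bookkeeping; the mild care needed is only to keep the factors $a_k^{-1}$, $c_{k-1}^{-1}$ on the correct side when forming $R_k$ and $L_k$ so that \eqref{eq:CF-rec-A}--\eqref{eq:CF-rec-B-rev} are reproduced exactly.
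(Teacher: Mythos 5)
Your proof is correct, and it takes a genuinely different route from the paper's. The paper argues by four separate scalar inductions: it sets $\Delta_n=\tilde{B}_{n-1}A_n-\tilde{A}_{n-1}B_n$ and $\tilde{\Delta}_n=\tilde{B}_nA_{n-1}-\tilde{A}_nB_{n-1}$, derives the coupled one-step recurrences $\Delta_{n+1}=\tilde{\Delta}_nc_n^{-1}a_{n+1}$ and $\tilde{\Delta}_{n+1}=c_{n+1}a_n^{-1}\Delta_n$ (using the Ore relation \eqref{eq:CF-Ore} as an \emph{input} to kill the $b$-terms), and then handles \eqref{eq:CF-Delta-2}--\eqref{eq:CF-Delta-3} by two further sign-flip inductions $\Delta^\prime_{n+1}=-\Delta^\prime_n$, $\Delta^{\prime\prime}_{n+1}=-\Delta^{\prime\prime}_n$ with base cases $\Delta^\prime_1=1$, $\Delta^{\prime\prime}_1=0$. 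Your single matrix induction $G_{n+1}=L_{n+1}G_nR_{n+1}$ packages the Ore relation and \eqref{eq:CF-Delta} into one statement: the vanishing diagonal of $G_n$ plays exactly the role that \eqref{eq:CF-Ore} plays in the paper's inductive step, except that in your version it is proved along the way rather than imported (I checked the step; e.g.\ the $(1,1)$ entry of $L_{n+1}D_nR_{n+1}$ is $(-1)^nc_{n+1}c_{n+1}^{-1}b_{n+1}+(-1)^{n+1}b_{n+1}a_{n+1}^{-1}c_nc_n^{-1}a_{n+1}=0$, as you indicate, and the four entries of $\mathcal{M}_nD_n^{-1}\mathcal{N}_n=I$ do reproduce \eqref{eq:CF-Delta-2}--\eqref{eq:CF-Delta-3} with the stated signs and placements of $a_n^{-1}$, $c_n^{-1}$). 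The genuinely new element is your derivation of the ``reversed'' identities by inversion instead of induction: since each $R_k$, $L_k$ is anti-triangular with invertible antidiagonal entries over $\DD$, the matrix $\mathcal{M}_n=\mathcal{M}_0R_1\cdots R_n$ is honestly two-sided invertible, so $\mathcal{N}_n=D_n\mathcal{M}_n^{-1}$ follows from $\mathcal{N}_n\mathcal{M}_n=D_n$ by right multiplication, and $\mathcal{M}_nD_n^{-1}\mathcal{N}_n=I$ is then pure associativity --- your caution about $XY=D$ not constraining $YX$ in a non-commutative ring is well placed, but the explicit invertibility settles it (the von Neumann finiteness remark is a correct belt-and-braces alternative). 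What your route buys: it explains structurally why \eqref{eq:CF-Delta-2}--\eqref{eq:CF-Delta-3} hold (they say $D_n^{-1}\mathcal{N}_n$ is the two-sided inverse of $\mathcal{M}_n$, a non-commutative surrogate for the adjugate formula), it eliminates two inductions, and it meshes naturally with the $2\times2$ transfer-matrix description of Section~\ref{sec:CF-2x2}, of which relation \eqref{eq:CF-2m2-ind} is the one-sided half. What the paper's route buys: it is more elementary, never inverts a matrix over a non-commutative ring, and makes visible exactly which relation feeds which (Ore into \eqref{eq:CF-Delta}; nothing extra into \eqref{eq:CF-Delta-2}--\eqref{eq:CF-Delta-3}).
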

\begin{proof}
	To prove equations \eqref{eq:CF-Delta} let
	\begin{equation*}
	\Delta_n = \tilde{B}_{n-1} A_n -\tilde{A}_{n-1} B_n , \qquad 
	\tilde{\Delta}_n = \tilde{B}_n A_{n-1} - \tilde{A}_n B_{n-1}.
	\end{equation*}
Then by recurrence relations \eqref{eq:CF-rec-A}-\eqref{eq:CF-rec-B} and
\eqref{eq:CF-rec-A-rev}-\eqref{eq:CF-rec-B-rev}, and equation \eqref{eq:CF-Ore} one obtains 
\begin{equation*}
\Delta_{n+1} = \tilde{\Delta}_n c_n^{-1} a_{n+1}, \qquad \tilde{\Delta}_{n+1} = c_{n+1} a_n^{-1}\Delta_n .
\end{equation*}
Directly one can show that 
$\Delta_1 = a_1$ and $\tilde{\Delta}_1 = -c_1$, which concludes the first part of the proof.

To show the first equation of \eqref{eq:CF-Delta-2} let 
\begin{equation*}
\Delta^\prime_n = A_n a_n^{-1} \tilde{B}_{n-1} - A_{n-1} c_n^{-1} \tilde{B}_n .
\end{equation*}
Then the statement follows from the fact that
\begin{equation*}
\Delta^\prime_{n+1} = - \Delta^{\prime}_n, \quad \text{and} \quad \Delta^\prime_1 = 1.
\end{equation*}
The second equation of \eqref{eq:CF-Delta-2} can be shown exactly in the same fashion. 

To show the first equation of \eqref{eq:CF-Delta-3} let 
\begin{equation*}
\Delta^{\prime\prime}_n = A_n a_n^{-1} \tilde{A}_{n-1} - A_{n-1} c_n^{-1} \tilde{A}_n .
\end{equation*}
Then the statement is a simple consequence the fact that
\begin{equation*}
\Delta^{\prime\prime}_{n+1} = - \Delta^{\prime\prime}_n, \quad \text{and} \quad \Delta^{\prime\prime}_1 = 0.
\end{equation*}
The second equation of \eqref{eq:CF-Delta-3} can be shown analogously. 
\end{proof}
	
\section{Continued fractions in terms of $2\times 2$ matrices}
\label{sec:CF-2x2}
Calculation of the continued fraction $C_n$ can be split into subsequent calculation of $Y_n=b_n$, $Y_{n-1},\dots $, $Y_1, Y_0 = C_n$, where  
\begin{equation} \label{eq:CF-system-Y}
Y_{k-1} = b_{k-1} + a_k Y_{k}^{-1} c_k, \quad k=n, \dots ,1 .
\end{equation}
The system can be put in the form
\begin{equation*}
\left( \begin{array}{c}
1 \\ Y_{k-1} \end{array} \right) c_k^{-1} Y_{k}  = 
\left( \begin{array}{cc}
0 & c_k^{-1}  \\ a_{k} &   b_{k-1}c_k^{-1} \end{array} \right) 	
\left( \begin{array}{c}
1 \\ Y_{k} \end{array} \right) , \qquad k=1,\dots , n-1 .
\end{equation*}
By induction one shows the following result.
\begin{Prop}
In terms of $2\times 2$ matrices the recurrence relations \eqref{eq:CF-rec-A}-\eqref{eq:CF-rec-B} take the form
\begin{equation} \label{eq:CF-2m2-ind}
\left( \begin{array}{cc}
0 & c_1^{-1} \\ a_{1} &   b_0 c_1^{-1} \end{array} \right) 	
\left( \begin{array}{cc}
0 & c_2^{-1} \\  a_{2} &   b_1 c_2^{-1} \end{array} \right) 	\cdots
\left( \begin{array}{cc}
0 & c_{k+1}^{-1} \\  a_{k+1} &  b_k c_{k+1}^{-1}\end{array} \right) =
\left( \begin{array}{cc}
B_{k-1}c_k^{-1} a_{k+1}  & B_k c_{k+1}^{-1}\\ A_{k-1}c_k^{-1} a_{k+1} & A_k c_{k+1}^{-1} \end{array} \right) ,
\end{equation}
what implies
\begin{equation} \label{eq:CF-2m2-AB}
\left( \begin{array}{c}
1 \\ C_n \end{array} \right) c_1^{-1} Y_{1} c_2^{-1} Y_2 \dots c_n^{-1} Y_n = 
\left( \begin{array}{cc}
B_{n-2}c_{n-1}^{-1} a_{n}  & B_{n-1} c_n^{-1} \\ A_{n-2}c_{n-1}^{-1} a_{n} & A_{n-1} c_n^{-1} \end{array} \right) \left( \begin{array}{c}
1 \\ b_n \end{array} \right) = \left( \begin{array}{c}
B_n \\ A_n \end{array} \right)
\end{equation}
and gives decomposition \eqref{eq:CF-AB}.
\end{Prop}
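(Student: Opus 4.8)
The plan is to prove the matrix identity \eqref{eq:CF-2m2-ind} by induction on $k$ and then to read off \eqref{eq:CF-2m2-AB} by iterating the one-step vector relation that precedes the statement. Throughout I abbreviate the $j$-th factor on the left-hand side as $M_j = \left(\begin{smallmatrix} 0 & c_j^{-1} \\ a_j & b_{j-1}c_j^{-1}\end{smallmatrix}\right)$, so the assertion is that $M_1 M_2 \cdots M_{k+1}$ equals the matrix on the right of \eqref{eq:CF-2m2-ind}.

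For the base case $k=0$ the left-hand product is simply $M_1$, while the right-hand side, evaluated with the initial data \eqref{eq:CF-rec-init} and the convention $c_0=1$, has entries $B_{-1}c_0^{-1}a_1 = 0$, $A_{-1}c_0^{-1}a_1 = a_1$, $B_0 c_1^{-1} = c_1^{-1}$ and $A_0 c_1^{-1} = b_0 c_1^{-1}$; this is exactly $M_1$. For the inductive step I would multiply the assumed identity on the right by $M_{k+2}$ and compute the four entries. Because the first column of $M_{k+2}$ is $\left(\begin{smallmatrix}0\\ a_{k+2}\end{smallmatrix}\right)$, it selects only the second column of the inductive matrix and produces the new first column $\left(\begin{smallmatrix}B_k c_{k+1}^{-1}a_{k+2}\\ A_k c_{k+1}^{-1}a_{k+2}\end{smallmatrix}\right)$; the second column of $M_{k+2}$ produces entries of the form $\bigl(B_{k-1}c_k^{-1}a_{k+1} + B_k c_{k+1}^{-1}b_{k+1}\bigr)c_{k+2}^{-1}$ and its $A$-analogue, which collapse to $B_{k+1}c_{k+2}^{-1}$ and $A_{k+1}c_{k+2}^{-1}$ precisely by the recurrences \eqref{eq:CF-rec-B} and \eqref{eq:CF-rec-A}. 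This is the right-hand side of \eqref{eq:CF-2m2-ind} with $k$ replaced by $k+1$, closing the induction.

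To obtain \eqref{eq:CF-2m2-AB} I would start from the rewriting of \eqref{eq:CF-system-Y} displayed before the statement, namely $\left(\begin{smallmatrix}1\\ Y_{k-1}\end{smallmatrix}\right)c_k^{-1}Y_k = M_k\left(\begin{smallmatrix}1\\ Y_k\end{smallmatrix}\right)$. Setting $Y_0 = C_n$ and applying this relation repeatedly for $k=1,2,\dots,n$ (the case $k=n$ using $Y_n=b_n$) --- at each stage multiplying on the right by $c_{k+1}^{-1}Y_{k+1}$ and replacing $\left(\begin{smallmatrix}1\\ Y_k\end{smallmatrix}\right)c_{k+1}^{-1}Y_{k+1}$ by $M_{k+1}\left(\begin{smallmatrix}1\\ Y_{k+1}\end{smallmatrix}\right)$ --- telescopes the left-hand side into $\left(\begin{smallmatrix}1\\ C_n\end{smallmatrix}\right)c_1^{-1}Y_1\cdots c_n^{-1}Y_n$ and the right-hand side into $M_1\cdots M_n\left(\begin{smallmatrix}1\\ Y_n\end{smallmatrix}\right)$. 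Since $Y_n = b_n$, substituting \eqref{eq:CF-2m2-ind} with $k+1=n$ and carrying out the final multiplication by $\left(\begin{smallmatrix}1\\ b_n\end{smallmatrix}\right)$ yields $\left(\begin{smallmatrix}B_n\\ A_n\end{smallmatrix}\right)$, again by \eqref{eq:CF-rec-A}-\eqref{eq:CF-rec-B}. Comparing the two components of the resulting vector identity, the top entry says that the scalar string $c_1^{-1}Y_1\cdots c_n^{-1}Y_n$ equals $B_n$, and the bottom entry then reads $C_n B_n = A_n$, which is the decomposition \eqref{eq:CF-AB}.

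The only delicate point, and the place I expect any error to hide, is the bookkeeping forced by non-commutativity: the inverses $c_j^{-1}$ must stay to the right of the $A$- and $B$-symbols but to the left of the $a$- and $b$-symbols, and the scalar factors $c_j^{-1}Y_j$ must remain on the correct (right) side of the column vectors at every step of the iteration. I anticipate no conceptual obstacle, since the recurrences \eqref{eq:CF-rec-A}-\eqref{eq:CF-rec-B} were tailored so that the column structure of \eqref{eq:CF-2m2-ind} reproduces them verbatim; the proof amounts to checking that this factor ordering is respected in each matrix multiplication.
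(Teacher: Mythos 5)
Your proof is correct and follows exactly the route the paper intends: the paper's entire proof is the remark ``By induction one shows the following result,'' relying on the displayed one-step relation $\left(\begin{smallmatrix}1\\ Y_{k-1}\end{smallmatrix}\right)c_k^{-1}Y_k = M_k\left(\begin{smallmatrix}1\\ Y_k\end{smallmatrix}\right)$, and your base case, inductive step via the recurrences \eqref{eq:CF-rec-A}--\eqref{eq:CF-rec-B}, and telescoping with $Y_n=b_n$ simply fill in that argument faithfully, including the correct handling of the non-commutative factor ordering. Note only that you rightly use the one-step relation up through $k=n$, whereas the paper's display says $k=1,\dots,n-1$ (an apparent off-by-one in its range), so your bookkeeping is if anything more careful than the source.
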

\begin{Rem}
	In the case of \emph{commutative coefficients} of the continued fractions, by taking determinant of both sides of the recurrence relation \eqref{eq:CF-2m2-ind} we would obtain the so called determinant formula between numerators and denominators of two successive approximants. In the fully non-commutative case one cannot expect such a formula due to the lack of appropriate notion of the determinant in that case; notice however equation \eqref{eq:CF-AA-BB}. The relation of non-commutative continued fractions to quasideterminants is presented in Section~\ref{sec:CF-Q}. 
\end{Rem}
\begin{Cor}
	The  system \eqref{eq:CF-system-Y} can be put in equivalent form
	\begin{equation*}
	Y_k a_k^{-1} \;  \left( \begin{array}{cc}
	1 \; , & \!\! Y_{k-1} \end{array} \right)  = \left( \begin{array}{cc}
	1 \; , & \!\! Y_{k} \end{array} \right)
	\left( \begin{array}{cc}
	0 &c_{k}   \\ a_k^{-1} &a_k^{-1}   b_{k-1}  \end{array} \right) 	
	, \qquad k= 1,\dots , n-1,
	\end{equation*}
	what leads to
	\begin{equation} \label{eq:CF-2m2-AB-tilde}
	Y_n a_n^{-1}   \dots Y_1 a_1^{-1} \; \left( \begin{array}{cc}
	1 \; , & \!\! C_n \end{array} \right)  = \left( \begin{array}{cc}
	1 \; , & \!\! b_n  \end{array} \right)
	\left( \begin{array}{cc}
	c_n a_{n-1}^{-1} \tilde{B}_{n-2} & c_n a_{n-1}^{-1} \tilde{A}_{n-2}   \\ 
	a_n^{-1}\tilde{B}_{n-1} & a_n^{-1}\tilde{A}_{n-1} \end{array} \right) 	= \left( \begin{array}{cc}
	\tilde{B}_{n} \; , & \!\! \tilde{A}_{n} \end{array} \right),
	\end{equation}
	and gives decomposition \eqref{eq:CF-BA}.
\end{Cor}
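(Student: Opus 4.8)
The plan is to mirror the proof of the preceding Proposition, but working on the row-vector (``left'') side and using the reversed recurrences \eqref{eq:CF-rec-A-rev}--\eqref{eq:CF-rec-B-rev} in place of \eqref{eq:CF-rec-A}--\eqref{eq:CF-rec-B}. First I would verify the single-step equivalent form by multiplying out: the row vector $(1,Y_k)$ times the displayed $2\times 2$ matrix gives $(Y_k a_k^{-1},\, c_k + Y_k a_k^{-1} b_{k-1})$, whereas the left-hand side is $(Y_k a_k^{-1},\, Y_k a_k^{-1} Y_{k-1})$. The first components agree identically, and equating the second components yields $Y_k a_k^{-1}(Y_{k-1}-b_{k-1}) = c_k$, which is precisely \eqref{eq:CF-system-Y}. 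Thus the matrix relation is genuinely equivalent to the original system.

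Next I would establish the row-side analogue of \eqref{eq:CF-2m2-ind}, namely that the product of step matrices taken in \emph{decreasing} index order satisfies
\begin{equation*}
\left( \begin{array}{cc} 0 & c_{k+1} \\ a_{k+1}^{-1} & a_{k+1}^{-1} b_k \end{array} \right) \cdots \left( \begin{array}{cc} 0 & c_1 \\ a_1^{-1} & a_1^{-1} b_0 \end{array} \right) = \left( \begin{array}{cc} c_{k+1} a_k^{-1} \tilde{B}_{k-1} & c_{k+1} a_k^{-1} \tilde{A}_{k-1} \\ a_{k+1}^{-1} \tilde{B}_k & a_{k+1}^{-1} \tilde{A}_k \end{array} \right),
\end{equation*}
by induction on $k$. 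The base case reduces to the single matrix and is checked against the initial data \eqref{eq:CF-rec-init} together with the convention $a_0 = 1$. For the inductive step I would left-multiply the claimed product by the next step matrix; the top row is then read off directly, while the bottom row combines into $a_{k+1}^{-1}\tilde{B}_k$ and $a_{k+1}^{-1}\tilde{A}_k$ exactly through the recurrences \eqref{eq:CF-rec-B-rev} and \eqref{eq:CF-rec-A-rev}.

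With these two ingredients, identity \eqref{eq:CF-2m2-AB-tilde} follows by telescoping. Rewriting the single-step relation as $(1,Y_{k-1}) = a_k Y_k^{-1}(1,Y_k)$ times the step matrix and substituting successively from $k=1$ up to $k=n$, the scalar prefactors $a_k Y_k^{-1}$ accumulate on the left and combine into $(Y_n a_n^{-1}\cdots Y_1 a_1^{-1})^{-1}$, while the step matrices collect into the product above. Using $Y_n = b_n$ and multiplying through on the left by $Y_n a_n^{-1}\cdots Y_1 a_1^{-1}$ gives the first equality of \eqref{eq:CF-2m2-AB-tilde}; applying the matrix identity and expanding $(1,b_n)$ against the resulting matrix, its two entries reduce to $\tilde{B}_n$ and $\tilde{A}_n$ by one further use of \eqref{eq:CF-rec-B-rev}--\eqref{eq:CF-rec-A-rev}. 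Reading off the two components of the final vector equation yields $\tilde{B}_n C_n = \tilde{A}_n$, i.e.\ the decomposition \eqref{eq:CF-BA}.

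The individual computations are routine matrix products; the only real difficulty, as everywhere in the non-commutative theory, is bookkeeping. I would have to be vigilant about the left-versus-right placement of every factor, and in particular about the fact that here the step matrices are multiplied in \emph{decreasing} order (the transpose of the situation in \eqref{eq:CF-2m2-ind}) and that the scalars $a_k Y_k^{-1}$ must telescope into exactly the claimed left factor rather than its reverse. Checking that these two orderings are mutually consistent is the main point where an error could slip in.
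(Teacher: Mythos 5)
Your proposal is correct and follows essentially the same route the paper intends: verifying the single-step row-vector form against \eqref{eq:CF-system-Y}, proving the row-side analogue of \eqref{eq:CF-2m2-ind} by induction via the reversed recurrences \eqref{eq:CF-rec-A-rev}--\eqref{eq:CF-rec-B-rev}, and telescoping with the scalar prefactors $a_kY_k^{-1}$ collecting into $\left(Y_na_n^{-1}\cdots Y_1a_1^{-1}\right)^{-1}$. Your orderings (step matrices in decreasing index, the left factor as stated) and the final reduction $\tilde{B}_nC_n=\tilde{A}_n$ all check out, matching the argument the paper leaves implicit as the mirror image of its column-vector Proposition.
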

\begin{Rem}
In addition to $Y_k$, $k=0,\dots , n$ let us introduce $X_k$, $k=1,\dots , n$, by 
$X_k= a_k Y_k^{-1}$. Then the system \eqref{eq:CF-system-Y} takes the form
\begin{align} \label{eq:CF-system-XY-a}
X_k Y_k & = a_k ,\\ \label{eq:CF-system-XY-b}
Y_{k-1} - X_{k} c_{k} & = b_{k-1} ,
\end{align}
where $k=1,\dots , n$, and the initial data is $Y_n = b_n$. We will come back to equations \eqref{eq:CF-system-XY-a}-\eqref{eq:CF-system-XY-b} in Section~\ref{sec:CF-Q}.
\end{Rem}

The following consequence of decomposition \eqref{eq:CF-2m2-AB} will be used in Section~\ref{sec:CF-per}.
\begin{Cor} \label{cor:CF-splitting}
	Assume that the data of the continued fraction split into two parts related to the partition $n=K+m$
\begin{equation} C_{K+m} =
\left[ \begin{array}{cccccc}
& \!\!\!a_1 & \cdots & a_K & \cdots & a_{K+m} \!\!\! \\
\!\!\! b_0 & \!\!\!b_1 & \cdots & b_K & \cdots &b_{K+m} \!\!\!\\
& \!\!\!c_1 & \cdots & c_K & \cdots &c_{K+m} \!\!\!
\end{array} \right] = 
\left[ \begin{array}{cccc}
& \!\!\!a_1 & \cdots & a_K \!\!\! \\
\!\!\! b_0 & \!\!\!b_1 & \cdots & C^K_m \!\!\!\\
& \!\!\!c_1 & \cdots & c_K \!\!\!
\end{array} \right],
\end{equation}	
where	
\begin{equation} 
C^K_m =
\left[ \begin{array}{cccc}
& \!\!\!a_{K+1} & \cdots & a_{K+m} \!\!\! \\
\!\!\! b_K & \!\!\!b_{K+1} & \cdots & b_{K+m} \!\!\!\\
& \!\!\!c_{K+1} & \cdots & c_{K+m} \!\!\!
\end{array} \right] .
\end{equation}	
By $A^K_j$ and $B^K_j$ denote (right) nominators and denominators of the convergents $C^K_j$ of the shifted continued fraction $C^K_m$, then equation \eqref{eq:CF-2m2-ind} implies
\begin{equation*}
\left( \begin{array}{cc}
B_{K-2}c_{K-1}^{-1} a_{K}  & B_{K-1} c_K^{-1} \\ A_{K-2}c_{K-1}^{-1} a_{K} & A_{K-1} c_K^{-1} \end{array} \right) 
\left( \begin{array}{cc}
B^K_{K+m-2}  & B^K_{K+m-1} \\ A^K_{K+m-2} & A^K_{K+m-1}  \end{array} \right) = 
  \left( \begin{array}{cc}
B_{K+m-2}  & B_{K+m-1}  \\ A_{K+m-2} & A_{K+m-1} \end{array} \right).
\end{equation*}
\end{Cor}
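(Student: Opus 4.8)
The plan is to read off the desired identity as a matrix factorization of equation \eqref{eq:CF-2m2-ind}, exploiting the fact that the product of elementary $2\times 2$ matrices over the index range $1,\dots,K+m$ factors as the product over $1,\dots,K$ times the product over $K+1,\dots,K+m$. First I would write down \eqref{eq:CF-2m2-ind} in the form
\begin{equation*}
\prod_{i=1}^{K+m}
\left( \begin{array}{cc}
0 & c_{i}^{-1} \\ a_{i} &  b_{i-1} c_{i}^{-1}\end{array} \right) =
\left( \begin{array}{cc}
B_{K+m-2}c_{K+m-1}^{-1} a_{K+m}  & B_{K+m-1} c_{K+m}^{-1}\\ A_{K+m-2}c_{K+m-1}^{-1} a_{K+m} & A_{K+m-1} c_{K+m}^{-1} \end{array} \right),
\end{equation*}
and split the left-hand product at position $K$ as the product of $\prod_{i=1}^{K}$ and $\prod_{i=K+1}^{K+m}$. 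The first factor is exactly the matrix appearing on the left of the claimed identity (by \eqref{eq:CF-2m2-ind} with index $K$, i.e. $k+1=K$), and the second factor is the analog of \eqref{eq:CF-2m2-ind} written for the shifted fraction $C^K_m$, whose coefficients are $a_{K+1},\dots$, $b_K,\dots$, $c_{K+1},\dots$.

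The main point to check is that the shifted recurrence genuinely produces the shifted convergents $A^K_j,B^K_j$ with the correct initial conditions. Since $C^K_m$ is itself a double-sided continued fraction in the symbols $\{b_K,a_{K+1},b_{K+1},c_{K+1},\dots\}$, Proposition \ref{prop:CF-AB} applies verbatim to it: its right nominators and denominators satisfy \eqref{eq:CF-rec-A}--\eqref{eq:CF-rec-B} with initial data \eqref{eq:CF-rec-init}, where now $b_0$ is replaced by $b_K$. Hence \eqref{eq:CF-2m2-ind}, applied to the shifted data and with the running index shifted by $K$, yields precisely
\begin{equation*}
\prod_{i=K+1}^{K+m}
\left( \begin{array}{cc}
0 & c_{i}^{-1} \\ a_{i} &  b_{i-1} c_{i}^{-1}\end{array} \right) =
\left( \begin{array}{cc}
B^K_{K+m-2}c_{K+m-1}^{-1} a_{K+m}  & B^K_{K+m-1} c_{K+m}^{-1}\\ A^K_{K+m-2}c_{K+m-1}^{-1} a_{K+m} & A^K_{K+m-1} c_{K+m}^{-1} \end{array} \right).
\end{equation*}

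I would then multiply the two boxed matrices together and identify the result with the right-hand side. The trailing factor $\mathrm{diag\text{-}like}$ data carrying $c_{K+m-1}^{-1}a_{K+m}$ and $c_{K+m}^{-1}$ is common to all three matrices, so after stripping the common right multiplier the identity reduces to the bare statement in the Corollary, in which the shifted convergents $A^K_{K+m-1},B^K_{K+m-1}$ and $A^K_{K+m-2},B^K_{K+m-2}$ appear without the elementary tail. The only subtlety, and the step I would be most careful about, is bookkeeping of indices: one must confirm that the factorization cut at $K$ matches the normalization $A_{-1}=1,B_{-1}=0,A_0=b_K,B_0=1$ of the shifted problem, so that the left factor indeed equals the matrix built from $A_{K-1},A_{K-2},B_{K-1},B_{K-2}$ as written. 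This is a purely formal verification once the associativity of the matrix product and Proposition \ref{prop:CF-AB} are invoked, and no genuine analytic obstacle arises.
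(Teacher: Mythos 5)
Your proposal is correct and is precisely the argument the paper intends: the Corollary is stated as a direct consequence of \eqref{eq:CF-2m2-ind}, obtained exactly as you do by splitting the $(K+m)$-fold product of elementary matrices at position $K$, applying \eqref{eq:CF-2m2-ind} verbatim to the shifted data $\{b_K,a_{K+1},b_{K+1},c_{K+1},\dots\}$, and cancelling the invertible right factor $\mathrm{diag}\left(c_{K+m-1}^{-1}a_{K+m},\, c_{K+m}^{-1}\right)$ --- which, to be precise, is common to the full product and the shifted product only (the left factor keeps its own tail in the $c_{K-1}^{-1}a_K$, $c_K^{-1}$ entries, exactly as displayed in the Corollary), though this slight imprecision in your phrase ``common to all three matrices'' does not affect the argument. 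Your index bookkeeping also matches the paper's absolute-indexing convention for the shifted convergents, with initial data $A^K_{K-1}=1$, $A^K_K=b_K$, $B^K_{K-1}=0$, $B^K_K=1$.
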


We close the Section with a weak version of the Serret theorem. The adjective "weak" in this paper means that we present only the implication in one direction, starting from a given form of the continued fraction. 
\begin{Cor} \label{cor-CF-Serret}
	Given two continued fractions $C$ and $\bar{C}$ whose coefficients are ultimately equal, i.e. for certain $K,L \geq 0$, and all $k\geq 0$
	\begin{equation*}
	a_{K+ k + 1} = \bar{a}_{L+k+1},\qquad 
	b_{K+ k } = \bar{b}_{L+k},\qquad 
	c_{K+ k + 1} = \bar{c}_{L+k+1},
	\end{equation*}
	then they are linked by right and left fractional-linear transformations
	\begin{equation}
\label{eq:CF-Serret}
\bar{C} = (\alpha + \beta C)(\gamma + \delta C)^{-1} = (\tilde{\alpha} + C \tilde{\beta} )^{-1} (\tilde{\gamma} + C\tilde{\delta}).
	\end{equation}
\end{Cor}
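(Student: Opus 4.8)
The plan is to realize each continued fraction as a fractional-linear transformation of its tail, by means of the $2\times 2$ matrix representation \eqref{eq:CF-2m2-ind}--\eqref{eq:CF-2m2-AB}, and then to compose the two transformations. First I would isolate the common tail: by hypothesis the coefficients of $C$ with indices at least $K$, namely $b_K, a_{K+1}, b_{K+1}, c_{K+1}, \dots$, agree term by term with those of $\bar C$ with indices at least $L$, so on the level of Mal'cev--Neumann series (where, by the completeness recalled in the introduction, the relevant infinite tails converge) the two tails coincide. In the notation of \eqref{eq:CF-system-Y} this common value is
\begin{equation*}
Z = Y_K = \bar Y_L .
\end{equation*}

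Applying \eqref{eq:CF-2m2-ind} to the first $K$ of the matrix factors attached to $C$ and writing $\mathcal{M}$ for their product, the telescoping in \eqref{eq:CF-2m2-AB} produces a nonzero scalar $\lambda$ (the accumulated product of the intermediate factors $c_k^{-1}Y_k$) with
\begin{equation*}
\left( \begin{array}{c} 1 \\ C \end{array}\right) \lambda = \mathcal{M} \left( \begin{array}{c} 1 \\ Z \end{array}\right), \qquad
\left( \begin{array}{c} 1 \\ \bar C \end{array}\right) \bar\lambda = \bar{\mathcal{M}} \left( \begin{array}{c} 1 \\ Z \end{array}\right),
\end{equation*}
the second relation coming from the same construction applied to the first $L$ factors of $\bar C$ and using that the tail is the same $Z$. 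Reading off the two rows and eliminating the scalar already expresses $C$ (respectively $\bar C$) as a right fractional-linear transformation of $Z$ whose coefficients are the entries of $\mathcal{M}$ (respectively $\bar{\mathcal{M}}$).

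The decisive algebraic point is that each factor matrix is invertible over the division ring, with explicit inverse $\left(\begin{smallmatrix} -a_k^{-1}b_{k-1} & a_k^{-1} \\ c_k & 0 \end{smallmatrix}\right)$, so that $\mathcal{M}$ is invertible. I would then solve the first relation for $\left(\begin{smallmatrix} 1 \\ Z \end{smallmatrix}\right)$ and substitute into the second, obtaining $\left(\begin{smallmatrix} 1 \\ \bar C \end{smallmatrix}\right)\bar\lambda = \mathcal{N} \left(\begin{smallmatrix} 1 \\ C \end{smallmatrix}\right)\lambda$ with $\mathcal{N} = \bar{\mathcal{M}}\mathcal{M}^{-1}$. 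Dividing the two rows, the projective factors $\lambda, \bar\lambda$ cancel and the entries of $\mathcal{N}$ furnish the right transformation $\bar C = (\alpha + \beta C)(\gamma + \delta C)^{-1}$. Running the identical argument with the dual row-vector representation \eqref{eq:CF-2m2-AB-tilde}, where the matrices multiply on the right and the scalar factors sit on the left, produces the left transformation $\bar C = (\tilde\alpha + C\tilde\beta)^{-1}(\tilde\gamma + C\tilde\delta)$ from the entries of the corresponding product matrix.

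The main obstacle is the non-commutative bookkeeping rather than any conceptual difficulty: one must keep the projective scalars $\lambda, \bar\lambda$ consistently on the right (respectively left) so that they cancel after dividing the rows, and one must use the explicit inverse of each factor matrix in place of any determinantal shortcut, since no determinant is available in this setting (cf.\ the Remark following \eqref{eq:CF-2m2-AB}). Once invertibility of the head matrices is in hand, the composition of the two fractional-linear transformations is exactly the matrix product $\bar{\mathcal{M}}\mathcal{M}^{-1}$, precisely as in the commutative case.
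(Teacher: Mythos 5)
Your proof is correct, and your scalar bookkeeping is exactly what makes it work without commutativity: from the first row of $\left(\begin{smallmatrix}1\\ \bar C\end{smallmatrix}\right)\bar\lambda = \bar{\mathcal{M}}\mathcal{M}^{-1}\left(\begin{smallmatrix}1\\ C\end{smallmatrix}\right)\lambda$ one reads $\bar\lambda = (\gamma+\delta C)\lambda$, so the right-sitting scalars do cancel in $\bar C=(\alpha+\beta C)\lambda\bar\lambda^{-1}=(\alpha+\beta C)(\gamma+\delta C)^{-1}$, and the factor inverse you exhibit is the correct two-sided inverse, so $\mathcal{M}$ is invertible over the division ring. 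The paper's own proof takes a related but mechanically different route that never inverts a $2\times 2$ matrix: it writes $C$ in terms of the common tail $C^*$ using the \emph{left} simple-fraction decomposition \eqref{eq:CF-BA}, inverts that relation by hand to obtain \eqref{eq:CF-C*C} --- whose crucial feature is that the $C$-coefficients emerge on the correct (left) side --- and then substitutes into the \emph{right} decomposition \eqref{eq:CF-AB} of $\bar C$; the left transformation in \eqref{eq:CF-Serret} follows by exchanging the roles of \eqref{eq:CF-AB} and \eqref{eq:CF-BA}. This cross-use of the two decompositions is the paper's way of circumventing precisely the obstruction you dispose of by matrix inversion: substituting the inverted right decomposition into another right decomposition would strand an inverse factor in the middle of a product, from which it cannot be extracted without commutativity. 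The two arguments are in fact secretly equivalent: relations \eqref{eq:CF-Delta-2}--\eqref{eq:CF-Delta-3} say that the matrix built from $\tilde{A}_k,\tilde{B}_k$ is, up to a sign and the $a_k^{-1}$, $c_k^{-1}$ insertions, the inverse of the one built from $A_k,B_k$, so your $\mathcal{M}^{-1}$ is the paper's tilde-data in disguise. What your route buys is a transparent composition law (the head matrices form a group, making the equivalence manifestly symmetric and transitive); what the paper's buys is explicit coefficients $\alpha,\beta,\gamma,\delta$ expressed directly through nominators and denominators of both decompositions, which is exactly the form used later via \eqref{eq:Y-X-l}--\eqref{eq:Y-X-r} in the Euler-type theorem of Section~\ref{sec:CF-per}. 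The only caveat, common to both proofs, is the implicit genericity assumption that all intermediate tails $Y_k$ (hence your scalars $\lambda$, $\bar\lambda$) are invertible, which holds in the free division ring setting the paper works in.
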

\begin{proof}
	Denote by $C^*$ the common part of the continued fractions
		\begin{equation*}
		C^* =  \left[ \begin{array}{cccc}
		& \!\!\!a_{K+1} & a_{K+2} &\cdots    \!\!\! \\
		\!\!\! b_K & \!\!\!b_{K+1} & b_{K+2} & \cdots   \!\!\!\\
		& \!\!\!c_{K+1} & c_{K+2} & \cdots  \!\!\! 
		\end{array} \right] =
		  \left[ \begin{array}{cccc}
		& \!\!\!\bar{a}_{L+1} & \bar{a}_{L+2} &\cdots    \!\!\! \\
		\!\!\! \bar{b}_{L} & \!\!\!\bar{b}_{L+1} & \bar{b}_{L+2} & \cdots   \!\!\!\\
		& \!\!\!\bar{c}_{L+1} & \bar{c}_{L+2} & \cdots  \!\!\! 
		\end{array} \right] .
		\end{equation*}
	Since by \eqref{eq:CF-BA}, with $C^*$ in the place of $b_K$, we have
	\begin{equation*}
	C =  \left[ \begin{array}{ccccc}
	& \!\!\!a_1 & \cdots & a_{K-1} & a_K \!\!\! \\
	\!\!\! b_0 & \!\!\!b_1 & \cdots & b_{K-1} & C^* \!\!\!\\
	& \!\!\!c_1 & \cdots & c_{K-1} & c_K \!\!\! 
	\end{array} \right] 
	=
	( c_K a_{K-1}^{-1} \tilde{B}_{K-2} + C^* 	a_K^{-1}\tilde{B}_{K-1} )^{-1}
	( c_K a_{K-1}^{-1} \tilde{A}_{K-2} + C^* 	a_K^{-1}\tilde{A}_{K-1} ) ,
	\end{equation*}
	then its inversion gives
	\begin{equation} \label{eq:CF-C*C}
	C^* = c_K a_{K-1}^{-1} ( \tilde{A}_{K-2} - \tilde{B}_{K-2} C)
	(\tilde{B}_{K-1} C - \tilde{A}_{K-1} )^{-1}  a_K .
	\end{equation}
	Inserting the above expression into the result of similar application of decomposition \eqref{eq:CF-AB}
		\begin{equation*}
		\bar{C} =  \left[ \begin{array}{ccccc}
		& \!\!\!\bar{a}_1 & \cdots & \bar{a}_{L-1} & \bar{a}_L \!\!\! \\
		\!\!\! \bar{b}_0 & \!\!\!\bar{b}_1 & \cdots & \bar{b}_{L-1} & C^* \!\!\!\\
		& \!\!\!\bar{c}_1 & \cdots & \bar{c}_{L-1} & \bar{c}_L \!\!\! 
		\end{array} \right] =
( \bar{A}_{L-2}\bar{c}_{L-1}^{-1} \bar{a}_{L}  + \bar{A}_{L-1} \bar{c}_L^{-1} C^* )
( \bar{B}_{L-2}\bar{c}_{L-1}^{-1} \bar{a}_{L}  + \bar{B}_{L-1} \bar{c}_L^{-1} C^* )^{-1}
		\end{equation*}
		we obtain the first of equations \eqref{eq:CF-Serret}. The second one can be obtained analogously by exchanging the role of equations~\eqref{eq:CF-AB} and \eqref{eq:CF-BA}.
\end{proof}

\section{The equivalence transformation}
\label{sec:CF-equiv}
\begin{Def}
	Two continued fractions are called equivalent if they
	have the same sequence of convergents.
\end{Def}

\begin{Prop}
Given sequence of non-zero elements $(\gamma_k , \delta_k)_{k\geq 0}$ with $\gamma_0 = \delta_0 = 1$, define transformation of the coefficients of the continued fraction~\ref{eq:CF} by
\begin{equation} \label{eq:CF-equiv}
(a_k , b_k , c_k ) \to (\bar{a}_k,\bar{b}_k, \bar{c}_k) = 
(\gamma_{k-1} a_k \delta_k, \gamma_{k} b_k \delta_k, \gamma_{k} c_k \delta_{k-1}), \qquad k\geq 1, \qquad \bar{b}_0 = b_0.
\end{equation}
Then 
\begin{equation} \label{eq:CF-equiv}
\left[ \begin{array}{cccc}
& \!\!\!\bar{a}_1 & \cdots & \bar{a}_n \!\!\! \\
\!\!\! \bar{b}_0 & \!\!\!\bar{b}_1 & \cdots & \bar{b}_n \!\!\!\\
& \!\!\!\bar{c}_1 & \cdots & \bar{c}_n \!\!\!
\end{array} \right] =
\left[ \begin{array}{cccc}
& \!\!\!a_1 & \cdots & a_n \!\!\! \\
\!\!\! b_0 & \!\!\!b_1 & \cdots & b_n \!\!\!\\
& \!\!\!c_1 & \cdots & c_n \!\!\!
\end{array} \right],
\end{equation}
in particular, the corresponding successive numerators $\bar{A}_k$ and denominators $\bar{B}_k$, as described in Proposition~\ref{prop:CF-AB}, are given by
\begin{equation} \label{eq:CF-equiv-AB}
\bar{A}_k = A_k \delta_k, \qquad \bar{B}_k = B_k \delta_k.
\end{equation}
Similarly, the nominators $\tilde{\bar{A}}_k$ and the denominators $\tilde{\bar{B}}_k$ of the corresponding simple fraction decomposition, as described in Corollary~\ref{cor:CF-BA}, are given by
\begin{equation} \label{eq:CF-equiv-BA}
\tilde{\bar{A}}_k = \gamma_k \tilde{A}_k , \qquad \tilde{\bar{B}}_k = \gamma_k \tilde{B}_k.
\end{equation}	
\end{Prop}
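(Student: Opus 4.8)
The plan is to establish the two numerator/denominator identities \eqref{eq:CF-equiv-AB} and \eqref{eq:CF-equiv-BA} by induction, and then to deduce the invariance of the continued fraction as a corollary of the first of these. Indeed, once $\bar{A}_n = A_n\delta_n$ and $\bar{B}_n = B_n\delta_n$ are known, the right simple fraction decomposition \eqref{eq:CF-AB} gives, using that $\delta_n$ is invertible,
\begin{equation*}
\bar{C}_n = \bar{A}_n\bar{B}_n^{-1} = A_n\delta_n\,(B_n\delta_n)^{-1} = A_n\delta_n\delta_n^{-1}B_n^{-1} = A_n B_n^{-1} = C_n,
\end{equation*}
so that the transformed fraction has exactly the same convergents for every $n$, which is precisely the asserted equivalence. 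It therefore suffices to prove the two gauge formulas for the numerators and denominators.

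For the right decomposition the decisive step is to record how the substitution acts on the two coefficient blocks appearing in the recurrence \eqref{eq:CF-rec-A}--\eqref{eq:CF-rec-B}. A direct computation, in which the factors $\gamma_{k+1}$, resp.\ $\gamma_k$, cancel against their inverses, yields
\begin{equation*}
\bar{c}_{k+1}^{-1}\bar{b}_{k+1} = \delta_k^{-1}\,c_{k+1}^{-1}b_{k+1}\,\delta_{k+1}, \qquad
\bar{c}_{k}^{-1}\bar{a}_{k+1} = \delta_{k-1}^{-1}\,c_{k}^{-1}a_{k+1}\,\delta_{k+1}.
\end{equation*}
Feeding the inductive hypotheses $\bar{A}_k = A_k\delta_k$ and $\bar{A}_{k-1} = A_{k-1}\delta_{k-1}$ into the transformed recurrence, the trailing $\delta_k$ and $\delta_{k-1}$ are annihilated by the leading $\delta_k^{-1}$ and $\delta_{k-1}^{-1}$, and a common factor $\delta_{k+1}$ can be pulled out on the right, reproducing precisely $A_{k+1}\delta_{k+1}$; the identical manipulation applies to $\bar{B}$. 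For the base cases I would use the normalization $\gamma_0 = \delta_0 = 1$ together with $c_0 = 1$, giving $\bar{A}_0 = b_0 = A_0\delta_0$ and $\bar{A}_{-1} = 1 = A_{-1}$ (and likewise for $\bar{B}$); adopting the harmless further convention $\gamma_{-1} = \delta_{-1} = 1$ makes the inductive step uniform already from $k=0$.

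The left decomposition \eqref{eq:CF-equiv-BA} is handled by the mirror-image argument applied to the recurrence \eqref{eq:CF-rec-A-rev}--\eqref{eq:CF-rec-B-rev}. Here the relevant blocks simplify, now with the $\delta$'s cancelling, to
\begin{equation*}
\bar{b}_{k+1}\bar{a}_{k+1}^{-1} = \gamma_{k+1}\,b_{k+1}a_{k+1}^{-1}\,\gamma_k^{-1}, \qquad
\bar{c}_{k+1}\bar{a}_{k}^{-1} = \gamma_{k+1}\,c_{k+1}a_{k}^{-1}\,\gamma_{k-1}^{-1},
\end{equation*}
so that the surviving $\gamma$'s conjugate from the left. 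With the hypotheses $\tilde{\bar{A}}_k = \gamma_k\tilde{A}_k$ and $\tilde{\bar{A}}_{k-1} = \gamma_{k-1}\tilde{A}_{k-1}$, the trailing $\gamma_k^{-1}$ and $\gamma_{k-1}^{-1}$ cancel the leading factors and a common $\gamma_{k+1}$ factors out on the left, giving $\gamma_{k+1}\tilde{A}_{k+1}$, with the analogous conclusion for $\tilde{\bar{B}}$.

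I do not anticipate a genuine obstacle: the statement is engineered so that the two one-parameter gauge families $(\gamma_k)$ and $(\delta_k)$ decouple, each cancelling cleanly in exactly one of the two recurrences (the $\gamma$'s in the right form, the $\delta$'s in the left). The only points requiring care are the bookkeeping of the index conventions $c_0 = 1$, $a_0 = 1$, $\gamma_0 = \delta_0 = 1$ in the base cases, and respecting the non-commutative ordering when inverting products such as $\bar{c}_{k+1}^{-1}\bar{b}_{k+1}$, where the inverse must be kept on the correct side; once these are observed the inductions are entirely routine.
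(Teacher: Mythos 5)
Your proof is correct and follows essentially the same route as the paper: induction on the recurrences \eqref{eq:CF-rec-A}--\eqref{eq:CF-rec-B} and \eqref{eq:CF-rec-A-rev}--\eqref{eq:CF-rec-B-rev}, with the $\gamma$'s cancelling in the right decomposition and the $\delta$'s in the left, and the equality of convergents then read off from $\bar{A}_n\bar{B}_n^{-1}=A_n\delta_n\delta_n^{-1}B_n^{-1}$. Your explicit block identities and the convention $\gamma_{-1}=\delta_{-1}=1$ merely make tidy what the paper handles by computing $\bar{A}_1$ directly as the base case.
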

\begin{proof}	The first nominator $\bar{A}_1$ can be calculated directly (recall that $\bar{A}_0 = b_0 = A_0 \delta_0$)
	\begin{equation*}
	\bar{A}_1 = \bar{b}_0 \bar{c}_1^{-1} \bar{b}_1 + \bar{a}_1 = (b_0 c_1^{-1} b_1 + a_1 ) \delta_1 = A_1 \delta_1,
	\end{equation*}
and the formulas for subsequent nominators follow by induction
\begin{equation*}
\bar{A}_{k+1} = \bar{A}_{k} \bar{c}_{k+1}^{-1} \bar{b}_{k+1}  + 
\bar{A}_{k-1} \bar{c}_{k}^{-1} \bar{a}_{k+1} =
(A_{k} c_{k+1}^{-1} b_{k+1} + A_{k-1} c_k^{-1} a_{k+1})\delta_{k+1}.
\end{equation*}
The denominators $\bar{B}_k$ are treated in the same way, which concludes proof of formulas \eqref{eq:CF-equiv} and \eqref{eq:CF-equiv-AB}.
Equations \eqref{eq:CF-equiv-BA} can be proven analogously.
\end{proof}
\begin{Ex}
	(i) The sequence $(\gamma_k,\delta_k) = (c_{k}^{-1},1)$ removes the $c$-symbols 
	\begin{equation}
\left[ \begin{array}{ccccc}
& \!\!\!a_1 & a_2 &\cdots & a_n \!\!\! \\
\!\!\! b_0 & \!\!\!b_1 & b_2 &\cdots & b_n \!\!\!\\
& \!\!\!c_1 & c_2 &\cdots & c_n \!\!\!
\end{array} \right] = 	
\left[ \begin{array}{ccccc}
& \!\!\!a_1 & c_1^{-1} a_2 &\cdots & c_{n-1}^{-1} a_n \!\!\! \\
\!\!\! b_0 & \!\!\! c_1^{-1} b_1 & c_2^{-1} b_2 &\cdots & c_n^{-1}b_n \!\!\!
\end{array} \right] ,	
	\end{equation}
	while (ii) the sequence $(\gamma_k,\delta_k) = (1,a_{k}^{-1})$ removes the $a$-symbols 
	\begin{equation}
	\left[ \begin{array}{ccccc}
	& \!\!\!a_1 & a_2 &\cdots & a_n \!\!\! \\
	\!\!\! b_0 & \!\!\!b_1 & b_2 &\cdots & b_n \!\!\!\\
	& \!\!\!c_1 & c_2 &\cdots & c_n \!\!\!
	\end{array} \right] = 	
	\left[ \begin{array}{ccccc}
	\!\!\! b_0 & \!\!\!  b_1 a_1^{-1} &  b_2 a_2^{-1} &\cdots & b_n   a_n^{-1}\!\!\! \\
	& \!\!\!c_1 & c_2 a_1^{-1} &\cdots & c_n a_{n-1}^{-1} \!\!\! 
	\end{array} \right] .	
	\end{equation}
\end{Ex}
\begin{Ex}
	(i) When the transformation with the sequence 
	\begin{equation*}
	(\gamma_1,\delta_1) = (b_1^{-1},a_1^{-1}), \qquad 
	(\gamma_2,\delta_2) = (b_2^{-1},a_2^{-1} b_1 ), \quad \text{and} \quad  
	(\gamma_k,\delta_k) = (b_{k}^{-1},a_k^{-1}b_{k-1})  \quad \text{for} \quad k\geq 3, 
	\end{equation*}
	is applied to the continued fraction with unital $c$-symbols then it gives 
	the continued fraction with unital $a$-symbols
	\begin{equation}
	\left[ \begin{array}{cccccc}
	& \!\!\!a_1 & a_2 & a_3 & \cdots & a_n \!\!\! \\
	\!\!\! b_0 & \!\!\!b_1 & b_2 & b_3 & \cdots & b_n \!\!\!
	\end{array} \right] =	
		\left[ \begin{array}{cccccc}
		\!\!\! b_0 & \!\!\! a_1^{-1} &  a_2^{-1} b_ 1 & a_3^{-1} b_2 & \cdots &    a_n^{-1} b_{n-1} \!\!\! \\
		& \!\!\! b_1^{-1} & b_2^{-1} a_1^{-1} & b_3^{-1} a_2^{-1} b_1 & \cdots & b_n^{-1} a_{n-1}^{-1} b_{n-2}\!\!\! 
		\end{array} \right] ,
	\end{equation}
	while (ii) application of the sequence
	\begin{equation*}
	(\gamma_1,\delta_1) = (c_1^{-1},b_1^{-1}), \qquad 
	(\gamma_2,\delta_2) = (b_1 c_2^{-1} , b_2^{-1}), \quad \text{and} \quad  
	(\gamma_k,\delta_k) = (b_{k-1} c_k^{-1},b_k^{-1})  \quad \text{for} \quad k\geq 3, 
	\end{equation*}
	to the continued fraction with unital $a$-symbols  gives
	the continued fraction with unital $c$-symbols
	\begin{equation}\left[ \begin{array}{cccccc}
	\!\!\! b_0 & \!\!\!b_1 & b_2 & b_3 & \cdots & b_n \!\!\! \\
		& \!\!\!c_1 & c_2 & c_3 & \cdots & c_n \!\!\! 
	\end{array} \right] =	
	\left[ \begin{array}{cccccc}
	& \!\!\! b_1^{-1} & c_1^{-1} b_2^{-1}  &b_1 c_2^{-1}  b_3^{-1}  & \cdots & b_{n-2} c_{n-1}^{-1}  b_n^{-1}  \!\!\!  \\
	\!\!\! b_0 & \!\!\! c_1^{-1} &   b_ 1 c_2^{-1}&  b_2 c_3^{-1}& \cdots &     b_{n-1} c_n^{-1}\!\!\! 	
	\end{array} \right] .
	\end{equation}
\end{Ex}

\begin{Prop} \label{prop:CF-red}
	The equivalence transformation with the sequences
	\begin{align*}
	\gamma_{2i+1} = c_1^{-1} a_2 c_3^{-1} \dots c_{2i-1}^{-1} a_{2i} c_{2i+1}^{-1}, & \qquad
	\gamma_{2i} = a_1 c_2^{-1} a_3 \dots a_{2i-1} c_{2i}^{-1} \\
	\delta_{2i+1} = a_{2i+1}^{-1} c_{2i} a_{2i-1}^{-1} \dots a_3^{-1} c_2 a_1^{-1}, & \qquad 
	\delta_{2i} = a_{2i}^{-1} c_{2i-1} \dots c_3 a_2^{-1} c_1 ,
	\end{align*}
	produces the continued fraction in reduced form (i.e. with unital $a$-coefficients and unital $c$-coefficients), and the new $b$-coefficients given by
	\begin{equation*}
	\bar{b}_{2i+1} = c_1^{-1}a_2 \dots a_{2i} c_{2i+1}^{-1} b_{2i+1}
	a_{2i+1}^{-1} c_{2i} \dots c_2 a_1^{-1} ,\quad 
	\bar{b}_{2i} = a_1 c_{2}^{-1} \dots a_{2i-1} c_{2i}^{-1} b_{2i} a_{2i}^{-1} c_{2i-1} \dots a_2^{-1} c_1 , 
	\end{equation*}
	i.e.
	\begin{equation}
	\left[ \begin{array}{ccccc}
	& \!\!\!a_1 & a_2 &\cdots & a_n \!\!\! \\
	\!\!\! b_0 & \!\!\!b_1 & b_2 &\cdots & b_n \!\!\!\\
	& \!\!\!c_1 & c_2 &\cdots & c_n \!\!\!
	\end{array} \right] = 	
	\left[ 
	 \bar{b}_0 , \bar{b}_1 , \bar{b}_2, \cdots ,\bar{b}_n 
	 \right] .	
	\end{equation}
\end{Prop}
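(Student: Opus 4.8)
The plan is to apply the equivalence transformation of the preceding Proposition and to choose the free sequences $(\gamma_k,\delta_k)$ so that all the transformed $a$- and $c$-coefficients become unital. Under the transformation rule $(a_k,b_k,c_k)\mapsto(\gamma_{k-1}a_k\delta_k,\gamma_k b_k\delta_k,\gamma_k c_k\delta_{k-1})$, requiring the reduced form amounts to imposing, for all $k\geq 1$,
\begin{equation*}
\gamma_{k-1}a_k\delta_k = 1, \qquad \gamma_k c_k\delta_{k-1}=1.
\end{equation*}
Solving each equation for its highest-index unknown turns these conditions into the coupled recurrences $\delta_k = a_k^{-1}\gamma_{k-1}^{-1}$ and $\gamma_k = \delta_{k-1}^{-1}c_k^{-1}$.

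The key point is that these relations are not over-determined: starting from the normalization $\gamma_0=\delta_0=1$, the first recurrence produces $\delta_k$ from $\gamma_{k-1}$ and the second produces $\gamma_k$ from $\delta_{k-1}$, so the whole sequence is determined step by step with no compatibility obstruction to verify. First I would run these recurrences through the first few indices, obtaining $\gamma_1=c_1^{-1}$, $\delta_1=a_1^{-1}$, $\gamma_2=a_1c_2^{-1}$, $\delta_2=a_2^{-1}c_1$, $\gamma_3=c_1^{-1}a_2c_3^{-1}$, $\delta_3=a_3^{-1}c_2a_1^{-1}$, which already exhibit the alternating even/odd pattern asserted in the statement. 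Then I would confirm the closed forms by induction, treating the two parities separately: inverting the displayed product for $\delta_{2i}$ gives $\delta_{2i}^{-1}=c_1^{-1}a_2\cdots c_{2i-1}^{-1}a_{2i}$, so that $\gamma_{2i+1}=\delta_{2i}^{-1}c_{2i+1}^{-1}$ is exactly the claimed expression, and inverting the product for $\gamma_{2i}$ similarly reproduces $\delta_{2i+1}=a_{2i+1}^{-1}\gamma_{2i}^{-1}$; the even-index formulas follow in the same manner.

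Finally, once the sequences $(\gamma_k,\delta_k)$ are fixed, the new $b$-coefficients are obtained directly from $\bar b_k=\gamma_k b_k\delta_k$: substituting the two closed forms for $\gamma_k$ and $\delta_k$ in each parity yields precisely the expressions for $\bar b_{2i+1}$ and $\bar b_{2i}$ displayed in the statement, while $\bar b_0 = b_0$ by the normalization $\gamma_0=\delta_0=1$. I expect the only genuine difficulty to be notational rather than conceptual, namely keeping the telescoping alternating products $c_1^{-1}a_2c_3^{-1}\cdots$ and their inverses correctly ordered across the even/odd split; no further algebraic subtlety arises, since every manipulation merely uses invertibility in the underlying division ring.
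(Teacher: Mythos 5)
Your proposal is correct and follows exactly the paper's own (much terser) argument: the paper's proof consists of observing that the conditions $\gamma_{k-1}a_k\delta_k=1$ and $\gamma_k c_k\delta_{k-1}=1$ yield recurrences determining the sequences, which is precisely your derivation of $\delta_k=a_k^{-1}\gamma_{k-1}^{-1}$ and $\gamma_k=\delta_{k-1}^{-1}c_k^{-1}$ from $\gamma_0=\delta_0=1$, followed by the inductive verification of the closed forms and the substitution $\bar{b}_k=\gamma_k b_k\delta_k$. You have simply made explicit the bookkeeping the paper leaves to the reader.
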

\begin{proof}
	Conditions
	\begin{equation*}
	\gamma_{k-1} a_k \delta_k =1  \qquad \text{and} \qquad \gamma_k c_k \delta_{k-1} = 1
	\end{equation*}
	give recurrence relations which allow to calculate the sequences of the transformation.
\end{proof}
\begin{Cor} 
	Exactly in the same way as above one can show that:\\
	(i) the sequence 
	\begin{equation*}
	\gamma_k = c_1^{-1} b_1 \dots c_{k-1}^{-1} b_{k-1} c_k^{-1}, \qquad
	\delta_k = b_k^{-1} c_k b_{k-1}^{-1} \dots b_1^{-1} c_1 
	\end{equation*} removes both the $b$- and $c$-symbols 
	\begin{equation*}
	\left[ \begin{array}{ccccc}
	& \!\!\!a_1 & a_2 &\cdots & a_n \!\!\! \\
	\!\!\! b_0 & \!\!\!b_1 & b_2 &\cdots & b_n \!\!\!\\
	& \!\!\!c_1 & c_2 &\cdots & c_n \!\!\!
	\end{array} \right] = 	
	\left[ \begin{array}{ccccc}
	& \!\!\!\bar{a}_1 & \bar{a}_2 &\cdots & \bar{a}_n \!\!\! \\
	\!\!\! b_0 & \!\!\! 1 & 1 &\cdots & 1 \!\!\!
	\end{array} \right] ,	\qquad \bar{a}_k = c_1^{-1} b_1 \dots c_{k-2}^{-1} b_{k-2} c_{k-1}^{-1} a_k b_k^{-1} c_k \dots  b_1^{-1} c_1 ,
	\end{equation*}
	while (ii) the sequence	
	\begin{equation*}
	\gamma_k =  a_1 b_1^{-1} \dots  b_{k-1}^{-1} a_k b_k^{-1}, \qquad
	\delta_k = a_k^{-1} b_{k-1}^{-1} a_{k-1}^{-1} \dots b_1 a_1^{-1} 
	\end{equation*} removes both the $a$- and $b$-symbols 
	\begin{equation*}
	\left[ \begin{array}{ccccc}
	& \!\!\!a_1 & a_2 &\cdots & a_n \!\!\! \\
	\!\!\! b_0 & \!\!\!b_1 & b_2 &\cdots & b_n \!\!\!\\
	& \!\!\!c_1 & c_2 &\cdots & c_n \!\!\!
	\end{array} \right] = 	
	\left[ \begin{array}{ccccc}
	\!\!\! b_0 & \!\!\!  1 &  1 &\cdots & 1 \!\!\! \\
	& \!\!\!\bar{c}_1 & \bar{c}_2 &\cdots & \bar{c}_n  \!\!\! , 
	\end{array} \right]  , \qquad 	\bar{c}_k = a_1 b_1^{-1} \dots  a_k b_k^{-1} c_k  a_{k-1}^{-1} b_{k-2} a_{k-2}^{-1} \dots b_1 a_1^{-1} .
	\end{equation*}
\end{Cor}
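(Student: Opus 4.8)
The plan is to follow the method used for Proposition~\ref{prop:CF-red}: instead of verifying the displayed closed forms directly, I would impose the conditions forcing the unwanted coefficients to become unital, extract from them first-order recurrences for the sequences $(\gamma_k,\delta_k)$, solve these, and only at the end substitute into the transformation rule $(a_k,b_k,c_k)\mapsto(\gamma_{k-1}a_k\delta_k,\,\gamma_k b_k\delta_k,\,\gamma_k c_k\delta_{k-1})$ to read off the surviving coefficients.

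For part~(i), requiring that the $b$- and $c$-symbols become unital amounts, through the transformation rule, to the two conditions $\gamma_k b_k\delta_k=1$ and $\gamma_k c_k\delta_{k-1}=1$. At each index $k$ these are two equations in the two new unknowns $\gamma_k,\delta_k$ (with $\delta_{k-1}$ already determined), so the system is exactly determined and no compatibility obstruction can occur. Solving the second for $\gamma_k=\delta_{k-1}^{-1}c_k^{-1}$ and inserting it into the first yields the recurrence $\delta_k=b_k^{-1}c_k\delta_{k-1}$ with $\delta_0=1$; unwinding it gives $\delta_k=b_k^{-1}c_k b_{k-1}^{-1}c_{k-1}\cdots b_1^{-1}c_1$ and hence $\gamma_k=c_1^{-1}b_1\cdots c_{k-1}^{-1}b_{k-1}c_k^{-1}$, as stated. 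Substituting these into $\bar{a}_k=\gamma_{k-1}a_k\delta_k$ then produces the displayed formula for $\bar{a}_k$.

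Part~(ii) is the mirror image. Here I would impose $\gamma_{k-1}a_k\delta_k=1$ and $\gamma_k b_k\delta_k=1$. The first gives $\delta_k=a_k^{-1}\gamma_{k-1}^{-1}$, and substituting into the second produces $\gamma_k=\gamma_{k-1}a_k b_k^{-1}$ with $\gamma_0=1$, whose solution is $\gamma_k=a_1 b_1^{-1}\cdots a_k b_k^{-1}$ and correspondingly $\delta_k=a_k^{-1}b_{k-1}a_{k-1}^{-1}\cdots b_1 a_1^{-1}$. Feeding these into $\bar{c}_k=\gamma_k c_k\delta_{k-1}$ yields the stated expression for $\bar{c}_k$; note that this final substitution is what fixes the exponents in $\delta_k$, and in particular confirms that the $b$-factors there carry exponent $+1$.

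All the intermediate steps are routine telescoping cancellations, so I do not expect a genuine obstacle. The only point demanding care --- and the sole place where the non-commutativity really enters --- is the bookkeeping of the order of the non-commuting factors and of the side on which each inverse sits while unwinding the recurrences; one must also confirm the base cases $k=1$ against the conventions $\gamma_0=\delta_0=1$ (together with $c_0=1$ in part~(i) and $a_0=1$ in part~(ii)), where an empty product is the only thing that could be mishandled.
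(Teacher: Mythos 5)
Your proposal is correct and follows essentially the paper's own proof, which consists of the phrase ``exactly in the same way as above'' pointing back to Proposition~\ref{prop:CF-red}, i.e.\ precisely your scheme: impose the unitality conditions $\gamma_k b_k\delta_k=1$ together with $\gamma_k c_k\delta_{k-1}=1$ (respectively $\gamma_{k-1}a_k\delta_k=1$), solve the resulting exactly-determined first-order recurrences for $(\gamma_k,\delta_k)$ with $\gamma_0=\delta_0=1$, and substitute into $\bar{a}_k=\gamma_{k-1}a_k\delta_k$ (respectively $\bar{c}_k=\gamma_k c_k\delta_{k-1}$). Your side remark in part~(ii) is also well taken: the correct unwinding gives $\delta_k=a_k^{-1}b_{k-1}a_{k-1}^{-1}\cdots b_1 a_1^{-1}$ with the $b$-factors to the power $+1$, so the exponent on $b_{k-1}$ in the statement's display is a typo, as the final $b_1a_1^{-1}$ there and the factor $b_{k-2}$ in the stated $\bar{c}_k$ confirm.
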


\section{Continued fractions and quasideterminants}
\label{sec:CF-Q}
\begin{Def}{\cite{Quasideterminants-GR1}}
Given square matrix $X=(x_{ij})_{i,j=1,\dots,n}$ with formal entries $x_{ij}$. In the free division ring generated by the set $\{ x_{ij}\}_{i,j=1,\dots,n}$ consider the formal inverse matrix $Y=X^{-1}= (y_{ij})_{i,j=1,\dots,n}$ to $X$.
The $(i,j)$th quasideterminant $|X|_{ij}$ of $X$ is the inverse $(y_{ji})^{-1}$ of the $(j,i)$th element of $Y$.
\end{Def}
Quasideterminants can be computed using the following recurrence relation~\cite{Quasideterminants-GR1}. For $n\geq 2$ let $X^{ij}$ be the square matrix obtained from $X$ by deleting the $i$th row and the $j$th column (with index $i/j$ skipped from the row/column enumeration), then
\begin{equation} \label{eq:QD-exp}
|X|_{ij} = x_{ij} - \sum_{\substack{ i^\prime \neq i \\ j^\prime \neq j }} x_{i j^\prime} (|X^{ij}|_{i^\prime j^\prime })^{-1} x_{i^\prime j}.
\end{equation}
\begin{Rem}
	One consider quasideterminants of matrices $X$ with some entries vanishing, provided inver\-ti\-bility of the matrix $Y$ and of its elements in question.   
\end{Rem}
\begin{Ex}
	The $(1,1)$ quasideterminant of the matrix
	 \begin{equation*}
	M_2 = \left( \begin{array}{rrr}
	b_0 & a_1 & 0 \\
	-c_{1} & b_1 & a_2 \\
	0 & -c_{2} & b_2
	\end{array}
	\right)
	\end{equation*}
	equals $C_2  = b_0 + a_1 ( b_1 + a_2 b_2^{-1}c_2)^{-1}c_1$.	
\end{Ex}
Non-commutative continued fractions of more involved form then those considered in the present paper were studied in relation to quasideterminants by Gelfand and Retakh  in \cite{Quasideterminants-GR1}. Our double-sided continued fractions are obtained from  their approach by considering tridiagonal matrices. Using the expansion formula \eqref{eq:QD-exp}, one can prove what follows.
\begin{Prop}
	The generalized continued fraction \eqref{eq:CF} equals $(1,1)$ quasideterminant of the tridiagonal matrix
	\begin{equation} \label{eq:QD-CF-gen}
	M_n =
	\left( \begin{array}{ccccc}
	b_0 & a_1 & 0 & \cdots & 0\\
	-c_1 & b_1 & a_2 &\cdots & 0\\
	%0 & -c_2 & b_2 & \cdots & 0 \\
	\vdots & \ddots &\ddots &\ddots & \vdots \\
	0 &\cdots &-c_{n-1}&b_{n-1}&a_n \\
	0 & \cdots & 0 & - c_n & b_n
	\end{array}
	\right)
	\end{equation}
\end{Prop}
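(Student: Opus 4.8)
The plan is to proceed by induction on $n$, expanding the corner quasideterminant $|M_n|_{11}$ by the recurrence \eqref{eq:QD-exp} and exploiting the tridiagonal shape of $M_n$ to collapse the expansion sum to a single term. The base case $n=0$ is immediate, since $M_0=(b_0)$ gives $|M_0|_{11}=b_0=C_0$; equivalently one may start from $n=1$, where $M_1$ reproduces $C_1=b_0+a_1 b_1^{-1} c_1$ as in the Example above.

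For the inductive step, I would apply \eqref{eq:QD-exp} with $i=j=1$. Because $M_n$ is tridiagonal, the only nonzero off-diagonal entry of its first row is $x_{12}=a_1$ and the only nonzero off-diagonal entry of its first column is $x_{21}=-c_1$; hence in the double sum both $j'$ and $i'$ are forced to equal $2$, and the expansion reduces to the single term
\[
|M_n|_{11}=b_0-a_1\bigl(|M_n^{11}|_{22}\bigr)^{-1}(-c_1)=b_0+a_1\bigl(|M_n^{11}|_{22}\bigr)^{-1} c_1 .
\]
The key observation is that the submatrix $M_n^{11}$, obtained by deleting the first row and column, is again tridiagonal of the very same type but with every coefficient shifted by one: its diagonal is $b_1,\dots,b_n$, its superdiagonal $a_2,\dots,a_n$ and its subdiagonal $-c_2,\dots,-c_n$. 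Thus $M_n^{11}$ is the matrix $M_{n-1}$ built from the shifted data $\{b_1,a_2,b_2,c_2,\dots,a_n,b_n,c_n\}$, and its corner quasideterminant $|M_n^{11}|_{22}$ equals, by the induction hypothesis, the tail continued fraction $b_1+a_2(b_2+\dots+a_n b_n^{-1} c_n\dots)^{-1} c_2$ appearing inside \eqref{eq:CF}. Substituting this back gives exactly $|M_n|_{11}=C_n$.

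I expect the argument to present no genuine obstacle: the only points requiring care are bookkeeping. One must keep track of the index relabelling in $M_n^{11}$ --- the original label $2$ becomes the top-left corner of the $n\times n$ submatrix, so that $|M_n^{11}|_{22}$ is indeed the corner quasideterminant to which the shifted induction hypothesis applies --- and one must note that every quasideterminant and inverse occurring above is well defined as a rational expression, which is automatic since we work with generic formal symbols in the free division ring. In essence, the tridiagonality is precisely what makes the quasideterminant recursion \eqref{eq:QD-exp} coincide, term by term, with the defining recursion $Y_{k-1}=b_{k-1}+a_k Y_k^{-1} c_k$ of \eqref{eq:CF-system-Y}.
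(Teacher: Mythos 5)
Your proof is correct and follows exactly the route the paper intends: the paper disposes of this proposition with the single remark that it follows from the expansion formula \eqref{eq:QD-exp}, and your induction --- collapsing the sum in \eqref{eq:QD-exp} at $(i,j)=(1,1)$ to the lone term $b_0+a_1\bigl(|M_n^{11}|_{22}\bigr)^{-1}c_1$ via tridiagonality and applying the hypothesis to the shifted submatrix --- is precisely that argument spelled out. Your attention to the index-relabelling convention for $M_n^{11}$ and to well-definedness in the free division ring is appropriate and introduces no gap.
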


\begin{Cor} \label{cor:Mn-dec}
	Equations \eqref{eq:CF-system-XY-a}-\eqref{eq:CF-system-XY-b} can be rewritten in the matrix form as the following decomposition
	\begin{equation*} 
	M_n = 
	\left( \begin{array}{ccccc}
	1 & X_1 & 0 & \cdots & 0\\
	0 & 1 & X_2 &\cdots & 0\\
	% 0 & 0 & 1 & \cdots & 0 \\
	\vdots & \ddots &\ddots &\ddots & \vdots \\
	0 &\cdots & 0 & 1 &X_n \\
	0 & \cdots & 0 & 0 & 1
	\end{array}
	\right)
	\left( \begin{array}{ccccc}
	Y_0 & 0 & 0 & \cdots & 0\\
	-c_1 & Y_1 & 0 &\cdots & 0\\
	% 0 & -c_2 & Y_2 & \cdots & 0 \\
	\vdots & \ddots &\ddots &\ddots & \vdots \\
	0 &\cdots &-c_{n-1}&Y_{n-1}& 0 \\
	0 & \cdots & 0 & - c_n & Y_n
	\end{array}
	\right)
	\end{equation*}	
	
\end{Cor}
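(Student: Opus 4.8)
The plan is to verify the proposed factorization by a direct entry-by-entry multiplication of the two bidiagonal matrices, and to read off that the product coincides with $M_n$ precisely when equations \eqref{eq:CF-system-XY-a}--\eqref{eq:CF-system-XY-b} hold, including the initial datum $Y_n = b_n$. All three matrices are $(n+1)\times(n+1)$; I would index rows and columns from $0$ to $n$, so that the right (lower bidiagonal) factor carries $Y_0,\dots,Y_n$ on its diagonal and $-c_1,\dots,-c_n$ on its subdiagonal, while the left (upper bidiagonal) factor carries units on the diagonal and $X_1,\dots,X_n$ on its superdiagonal.

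First I would observe that, since the left factor is upper bidiagonal and the right factor is lower bidiagonal, the product can be nonzero only on the three central diagonals; every entry at distance $\ge 2$ from the main diagonal vanishes automatically, which already reproduces the tridiagonal shape of $M_n$. Then I would compute the three surviving families. The subdiagonal entry in position $(i,i-1)$ is simply $-c_i$, inherited from the lower factor, matching $M_n$. The superdiagonal entry in position $(i,i+1)$ equals $X_{i+1}Y_{i+1}$, which is $a_{i+1}$ by \eqref{eq:CF-system-XY-a}. Finally, the diagonal entry in position $(i,i)$ equals $Y_i - X_{i+1}c_{i+1}$ for $i<n$, which is $b_i$ by \eqref{eq:CF-system-XY-b} taken with $k=i+1$, while for $i=n$ the absent $X_{n+1}$-term leaves only $Y_n$, equal to $b_n$ by the initial condition.

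The computation is essentially bookkeeping, so there is no serious obstacle; the single point requiring care is that we work over a non-commutative division ring, so the order of the factors cannot be permuted. In the product the $X$-entries stand to the \emph{left} of the $Y$- and $c$-entries, which is exactly the order in which $X_k$ multiplies $Y_k$ in \eqref{eq:CF-system-XY-a} and $X_k$ multiplies $c_k$ in \eqref{eq:CF-system-XY-b}; this confirms that the decomposition faithfully encodes those equations rather than some commuted variant. An equivalent and perhaps cleaner route is to start from the right-hand decomposition and multiply it out, thereby \emph{re-deriving} \eqref{eq:CF-system-XY-a}--\eqref{eq:CF-system-XY-b}, which makes the correspondence between the factorization and the system transparent in both directions.
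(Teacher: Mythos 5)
Your proof is correct and matches the paper's (implicit) argument: the Corollary is stated without proof precisely because it follows from the routine entry-by-entry multiplication you carry out, with the diagonal, superdiagonal and subdiagonal entries reproducing $b_i$, $a_{i+1}$ and $-c_i$ via equations \eqref{eq:CF-system-XY-a}--\eqref{eq:CF-system-XY-b} and $Y_n=b_n$. Your explicit attention to the left-multiplication by the $X_k$'s in the non-commutative setting is exactly the one point worth recording.
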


\section{The $qd$-algorithm for non-commutative double-sided continued fractions}
\label{sec:LR-QD}
In this Section we give the corresponding analog of the well known $LR$ algorithm (and thus also the $qd$-algorithm) by Rutishauser~\cite{Rutishauser-qd,Rutishauser-LR}. 
Let us consider the following (similar to that described by Corollary~\ref{cor:Mn-dec}) factorization of the matrix $M_n$,  in terms of lower-triangular matrix  $L_n$ and upper-triangular matrix 
$R_n$
\begin{equation*} 
M_n = L_n R_n =
\left( \begin{array}{ccccc}
1 & 0 & 0 & \cdots & 0\\
-Z_1 & 1 & 0 &\cdots & 0\\
% 0 & -Z_2 & 1 & \cdots & 0 \\
\vdots & \ddots &\ddots &\ddots & \vdots \\
0 &\cdots & -Z_{n-1} & 1 & 0 \\
0 & \cdots & 0 & -Z_n & 1
\end{array}
\right)
\left( \begin{array}{ccccc}
Y_0 & a_1 & 0 & \cdots & 0\\
0 & Y_1 & a_2 &\cdots & 0\\
% 0 & a_2 & Y_2 & \cdots & 0 \\
\vdots & \ddots &\ddots &\ddots & \vdots \\
0 &\cdots & 0 &Y_{n-1}& a_n \\
0 & \cdots & 0 & 0 & Y_n
\end{array}
\right), 
\end{equation*}	
which gives the identification
\begin{align} \label{eq:CF-LR-0-b}
b_k  & = Y_k -  Z_k a_k, &  Z_0 = 0, \qquad k=0,1, \dots ,n ,\\ \label{eq:CF-LR-0-c}
c_k  & = Z_k Y_{k-1} ,  &  k=1,2, \dots , n .
\end{align}
Define matrix $M_n^\prime$ by the opposite decomposition $M_n^\prime = R_n L_n $, i.e.
\begin{equation*} 
M_n^\prime 
=  \left( \begin{array}{ccccc}
Y_0 - a_1 Z_1   & a_1  & 0 & \cdots & 0\\
- Y_1  Z_1  & Y_1  - a_2 Z_2   & a_2  & \cdots & 0\\
%0 & -c_2 & b_2 & \cdots & 0 \\
\vdots & \ddots &\ddots &\ddots & \vdots \\
0 &\cdots &- Y_{n-1}  Z_{n-1} &Y_{n-1}  - a_n Z_n  & a_n  \\
0 & \cdots & 0 & - Y_n  Z_n  & Y_n 
\end{array}
\right) . 
\end{equation*}	
Such transformation preserves the tridiagonal form of the matrix and defines therefore its new coefficients  $b^\prime_k$ and $c^\prime_k$ (notice that $a^\prime_k = a_k$). By analogs of equations \eqref{eq:CF-LR-0-b}-\eqref{eq:CF-LR-0-c} we have also new coefficients $Y_k^\prime$ and $Z_k^\prime$. 

By repeating the transformation, we obtain the dynamical system with parameters $a_1, \dots , a_n$
\begin{align} \label{eq:CF-LR-1}
Y^{(m+1)}_k -  Z_k^{(m+1)}a_k & = Y_k^{(m)} - a_{k+1} Z_{k+1}^{(m)} ,\\
\label{eq:CF-LR-2}
Z_k^{(m+1)} Y_{k-1}^{(m+1)} & = Y_k^{(m)} Z_k^{(m)},
\end{align}
where, by definition, $Z^{(m)}_0 = Z^{(m)}_{n+1} = 0$ for all $m=0,1,2,\dots$. This is the corresponding version of the $LR$ algorithm, where the initial data are $Y^{(0)}_0$, $Y^{(0)}_1, Z^{(0)}_1$, \dots , $Y^{(0)}_n, Z^{(0)}_n$.
\begin{Rem}
	Equations \eqref{eq:CF-LR-1}-\eqref{eq:CF-LR-2} can be considered also as a non-commutative discrete-time Toda lattice equation; see~\cite{Hirota-1983} or \cite{IDS} for the commutative case.
\end{Rem}

The same dynamical system \eqref{eq:CF-LR-1}-\eqref{eq:CF-LR-2} but with initial data $Y^{(m)}_0$, where still $Z^{(m)}_0=0$ for all $m\geq 0$, gives the corresponding version of the $qd$-algorithm.
\begin{Rem}
	The non-commutative version of $qd$-algorithm for one-sided continued fractions by Wynn~\cite{Wynn} is obtained from the above equations by putting $a_k = -1$ for all $k$, and with identification $q^{(m)}_k = Y^{(m)}_{k-1}$, $e^{(m)}_k = Z^{(m)}_{k}$.
\end{Rem}

\section{Periodic continued fractions} \label{sec:CF-per}
\begin{Def} \label{def:CF-sP}
	Strictly periodic continued fraction with period $P> 0$ is infinite continued fraction subject to periodicity condition $a_{k+P+1}=a_{k+1}$, $b_{k+P}=b_k$, $c_{k+P+1}=c_{k+1}$, for all $k\geq 0$, i.e. 
	\begin{equation} \label{eq:CF-sP-Y}
	Y = \left[ \begin{array}{ccccc}
	& \!\!\!a_1 & \cdots & a_{P-1}& a_P \!\!\! \\
	\!\!\! b_0 & \!\!\!b_1 & \cdots & b_{P-1} & Y \!\!\!\\
	& \!\!\!c_1 & \cdots &c_{P-1} & c_P \!\!\!
	\end{array} \right] =
	b_0 + a_1 ( b_1 + \dots + a_{P-1} ( b_{P-1} + a_P Y^{-1}c_P )^{-1}c_{P-1} \dots )^{-1}  c_1 .	\end{equation}
\end{Def}
\begin{Rem}
	The periodicity condition can be transferred to infinite periodic tridiagonal matrices giving the qusideterminantal description of non-commutative strictly periodic continued fractions. 
		Equations \eqref{eq:CF-system-XY-a}-\eqref{eq:CF-system-XY-b} in the periodic case can be rewritten however in the finite matrix form (here $Y_0 = Y$)
		\begin{multline*}  \!\!\!\!\!
		\left( \!\!\! \begin{array}{ccccc}
		b_0 & a_1 &  & \cdots & -c_P\\
		-c_1 & b_1 &  &\cdots & 0\\
		%0 & -c_2 & b_2 & \cdots & 0 \\
		\vdots & \ddots &\ddots &\ddots & \vdots \\
		0 &\cdots &&b_{P-2}&a_{P-1} \\
		a_P & 0 & \cdots  & - c_{P-1} & b_{P-1} 
		\end{array} \!\!\!
		\right) = 
		\left( \!\!\! \begin{array}{ccccc}
		1 & X_1 &  & \cdots & 0\\
		0 & 1 &  &\cdots & \\
		% 0 & 0 & 1 & \cdots & 0 \\
		\vdots & &\ddots &\ddots & \vdots \\
		0 &\cdots &  & 1 &X_{P-1} \\
		X_0 &  0 & \cdots & 0 & 1
		\end{array} \!\!\!
		\right) \!\!
		\left( \!\!\! \begin{array}{ccccc}
		Y_0 & 0 &  \cdots & 0 & -c_P\\
		-c_1 & Y_1 & \ddots & \cdots & 0\\
		% 0 & -c_2 & Y_2 & \cdots & 0 \\
		\vdots & \ddots& \ddots &  & \vdots \\
		& &&Y_{P-2}& 0 \\
		0 & \cdots& &  - c_{P-1} & Y_{P-1}
		\end{array} \!\!\!
		\right) . 
		\end{multline*}	
\end{Rem}

\begin{Prop} \label{prop:CF-sP-Y}
	The strictly periodic continued fraction given by \eqref{eq:CF-sP-Y} satisfies the following second degree equation
	\begin{equation} \label{eq:CF-2-order-eq}
	Y A\, Y + B \,  Y + Y C + D = 0,
	\end{equation}
	where
	\begin{equation} \label{eq:CF-2-order-eq-ABCD}
	A = B_{P-1}c_P^{-1} , \qquad B = - A_{P-1} c_P^{-1} , \qquad C =  B_{P-2} c_{P-1}^{-1} a_P , \qquad D= - A_{P-2}  c_{P-1}^{-1} a_P .
	\end{equation}
\end{Prop}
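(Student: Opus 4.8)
The plan is to exploit the self-referential structure of the strictly periodic fraction. Comparing the defining expression \eqref{eq:CF-sP-Y} with the general continued fraction \eqref{eq:CF}, one sees that $Y$ is precisely the one-period convergent $C_P$ in which the innermost coefficient $b_P$ has been replaced by $Y$ itself. Thus I would regard the finite continued fraction of length $P$ as a function of its last $b$-entry and then impose the fixed-point condition obtained by substituting that entry by $Y$.

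Next I would apply the matrix formula \eqref{eq:CF-2m2-AB} (equivalently the recurrences \eqref{eq:CF-rec-A}--\eqref{eq:CF-rec-B} at the final step), which gives $C_P = A_P B_P^{-1}$ with $A_P = A_{P-1} c_P^{-1} b_P + A_{P-2} c_{P-1}^{-1} a_P$ and $B_P = B_{P-1} c_P^{-1} b_P + B_{P-2} c_{P-1}^{-1} a_P$. The crucial observation is that the quantities $A_{P-1}, A_{P-2}, B_{P-1}, B_{P-2}$ are built from $a_1,\dots,a_P$, $b_0,\dots,b_{P-1}$, $c_1,\dots,c_P$ only, and hence do not involve $b_P$; therefore the $b_P$-dependence enters linearly. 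Substituting $b_P \to Y$ yields the fractional-linear relation
\begin{equation*}
Y = \bigl( A_{P-1} c_P^{-1} Y + A_{P-2} c_{P-1}^{-1} a_P \bigr)\bigl( B_{P-1} c_P^{-1} Y + B_{P-2} c_{P-1}^{-1} a_P \bigr)^{-1}.
\end{equation*}

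Finally I would clear the inverse by multiplying on the right by the bracketed denominator, keeping the non-commutative order scrupulously since $Y$ stands to the left, obtaining
\begin{equation*}
Y B_{P-1} c_P^{-1} Y + Y B_{P-2} c_{P-1}^{-1} a_P - A_{P-1} c_P^{-1} Y - A_{P-2} c_{P-1}^{-1} a_P = 0 ,
\end{equation*}
and reading off the coefficients gives exactly \eqref{eq:CF-2-order-eq} with $A,B,C,D$ as in \eqref{eq:CF-2-order-eq-ABCD}. The only point requiring genuine care is the book-keeping of the multiplication order throughout the non-commutative manipulation: one must verify that clearing the right inverse places the term $Y A Y$ on the far left and that the signs attached to $B$ and $D$ emerge correctly. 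There is no analytic obstacle here; the statement is the non-commutative shadow of the classical fact that a periodic continued fraction is a fixed point of the associated M\"obius-type transformation, and the derivation is essentially a single application of the recurrence followed by one algebraic rearrangement.
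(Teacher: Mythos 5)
Your proposal is correct and follows essentially the same route as the paper's own proof: the paper likewise substitutes $b_P \to Y$ into the right simple fraction $C_P = A_P B_P^{-1}$ via the recurrences \eqref{eq:CF-rec-A}--\eqref{eq:CF-rec-B}, obtaining $Y = \left( A_{P-2}c_{P-1}^{-1} a_P + A_{P-1}c_P^{-1} Y \right)\left( B_{P-2}c_{P-1}^{-1} a_P + B_{P-1}c_P^{-1} Y \right)^{-1}$, and clearing the right inverse gives \eqref{eq:CF-2-order-eq} with coefficients \eqref{eq:CF-2-order-eq-ABCD}. Your careful handling of the non-commutative order when multiplying through by the denominator matches the intended argument exactly.
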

\begin{proof}
In the strictly periodic continued fraction $Y$ the coefficient $b_P$ is replaced by $Y$, therefore
	\begin{equation*}
	Y = A_P(b_P \to Y) B_{P}^{-1} (b_P \to Y) = 
	\left( A_{P-2}c_{P-1}^{-1} a_P + A_{P-1}c_P^{-1} Y \right) 
	\left( B_{P-2}c_{P-1}^{-1} a_P + B_{P-1}c_P^{-1} Y \right)^{-1}.
	\end{equation*}
\end{proof}
\begin{Rem}
	Solutions of equation \eqref{eq:CF-2-order-eq} with operator coefficients in terms of  strictly $2$-periodic continued fractions, and related convergence questions were discussed in \cite{BusbyFair}.
\end{Rem}
\begin{Cor}
	Because of formula \eqref{eq:CF-2m2-AB} equation \eqref{eq:CF-sP-Y} of the strictly periodic continued fraction $Y$ can be formulated as the following  eigenvalue problem 
\begin{equation} 
\left( \begin{array}{c}
1 \\ Y \end{array} \right) \lambda = 
\left( \begin{array}{cc}
B_{P-2}c_{P-1}^{-1} a_{P}  & B_{P-1}c_P^{-1} \\ A_{P-2}c_{P-1}^{-1} a_{P} & A_{P-1}c_P^{-1} \end{array} \right) \left( \begin{array}{c}
1 \\  Y \end{array} \right) .
\end{equation}	
Elimination of the eigenvalue $\lambda$ gives equation \eqref{eq:CF-2-order-eq}.
\end{Cor}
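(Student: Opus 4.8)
The plan is to read the eigenvalue equation directly off the already-established matrix decomposition \eqref{eq:CF-2m2-AB}, and then to recover the quadratic relation \eqref{eq:CF-2-order-eq} by separating the two scalar components of that equation and eliminating the right-hand scalar factor.

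First I would specialize \eqref{eq:CF-2m2-AB} to $n = P$. In the strictly periodic case the innermost coefficient $b_P$ is replaced by the whole continued fraction $Y$, so that $C_P = Y$; this is precisely the content of Definition~\ref{def:CF-sP}, and it mirrors the substitution $b_P \to Y$ already used in the proof of Proposition~\ref{prop:CF-sP-Y}. Crucially, the numerators and denominators $A_{P-1}, A_{P-2}, B_{P-1}, B_{P-2}$ that enter the transfer matrix depend, by the recurrences \eqref{eq:CF-rec-A}--\eqref{eq:CF-rec-B}, only on the data up to index $P-1$, and are therefore unaffected by the substitution. Collecting the scalar factors on the left of \eqref{eq:CF-2m2-AB} into a single ring element $\lambda = c_1^{-1} Y_1 c_2^{-1} Y_2 \cdots c_P^{-1} Y_P$ (legitimate since each $Y_k$ and $c_k$ lies in the division ring, with $Y_P = Y$ in the periodic case) turns \eqref{eq:CF-2m2-AB} verbatim into the stated eigenvalue problem. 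The one point requiring care is that $\lambda$ multiplies the column vector from the right, as dictated by the non-commutative ordering in \eqref{eq:CF-2m2-AB}.

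Next I would eliminate $\lambda$. The top row of the eigenvalue equation gives $\lambda = B_{P-2} c_{P-1}^{-1} a_P + B_{P-1} c_P^{-1} Y$, while the bottom row gives $Y \lambda = A_{P-2} c_{P-1}^{-1} a_P + A_{P-1} c_P^{-1} Y$. Substituting the first into the second and expanding produces $Y B_{P-1} c_P^{-1} Y + Y B_{P-2} c_{P-1}^{-1} a_P - A_{P-1} c_P^{-1} Y - A_{P-2} c_{P-1}^{-1} a_P = 0$, which, after inserting the identifications \eqref{eq:CF-2-order-eq-ABCD} (so that $B_{P-1}c_P^{-1} = A$, $-A_{P-1}c_P^{-1} = B$, $B_{P-2}c_{P-1}^{-1}a_P = C$, $-A_{P-2}c_{P-1}^{-1}a_P = D$), is exactly $Y A Y + B Y + Y C + D = 0$, i.e.\ \eqref{eq:CF-2-order-eq}.

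I do not anticipate a serious obstacle: the statement is essentially a re-reading of \eqref{eq:CF-2m2-AB} followed by a one-line elimination. The only genuinely non-trivial checks are bookkeeping ones --- confirming that both $C_P$ and $b_P$ specialize to $Y$ while the four transfer-matrix entries remain the periodic quantities of \eqref{eq:CF-2-order-eq-ABCD}, and respecting the left/right placement of every factor throughout, so that the final quadratic emerges with $Y$ on the correct side of each coefficient $A,B,C,D$.
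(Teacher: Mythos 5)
Your proposal is correct and follows exactly the route the paper intends: specializing \eqref{eq:CF-2m2-AB} to $n=P$ with $b_P \to Y$ and $C_P = Y$, collecting the factor $\lambda = c_1^{-1}Y_1\cdots c_P^{-1}Y_P$ acting on the right, and then eliminating $\lambda$ by substituting the top row into the bottom row to recover \eqref{eq:CF-2-order-eq} with the coefficients \eqref{eq:CF-2-order-eq-ABCD}. Your bookkeeping observations --- that $A_{P-1},A_{P-2},B_{P-1},B_{P-2}$ are untouched by the substitution, and that $\lambda$ must sit on the right of the column vector --- are precisely the points the paper leaves implicit.
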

\begin{Rem}
	From the other side, formula \eqref{eq:CF-2m2-AB-tilde} implies another  eigenvalue problem 
	\begin{equation} \tilde{\lambda}
	\left( \begin{array}{cc} 1 \;, & \!\!\!Y \end{array} \right) =
	\left( \begin{array}{cc} 1 \; , & \!\!\! Y \end{array} \right)
	\left( \begin{array}{cc}
	c_P a_{P-1}^{-1} \tilde{B}_{P-2}  & 	c_P a_{P-1}^{-1} \tilde{A}_{P-2} \\ 
	a_P^{-1} \tilde{B}_{P-1}   & a_P^{-1} \tilde{A}_{P-1}   \end{array} \right) ,
	\end{equation}	
	which leads to equation \eqref{eq:CF-2-order-eq} but with coefficients
	\begin{equation}
	A = a_P^{-1} \tilde{B}_{P-1} , \qquad B = c_P a_{P-1}^{-1} \tilde{B}_{P-2}  , \qquad 
	C =  - a_P^{-1} \tilde{A}_{P-1} , \qquad D= - c_P a_{P-1}^{-1} \tilde{A}_{P-2}  .
	\end{equation}
\end{Rem}

\begin{Def}
	Periodic (or ultimately periodic) continued fraction is infinite continued fraction subject to periodicity condition $a_{k+P+1}=a_{k+1}$, $b_{k+P}=b_k$, $c_{k+P+1}=c_{k+1}$ for all $k\geq K\geq 0$, i.e.
	\begin{equation} \label{eq:CF-P-X}
	X = \left[ \begin{array}{ccccc}
	& \!\!\!a_1 & \cdots &a_{K-1} & a_K \!\!\! \\
	\!\!\! b_0 & \!\!\!b_1 & \cdots & b_{K-1} & Y \!\!\!\\
	& \!\!\!c_1 & \cdots & c_{K-1} & c_K \!\!\!
	\end{array} \right]	\end{equation}
	where $Y$ is strictly periodic continued fraction
	\begin{equation} \label{eq:CF-P-sP-Y}
	Y = \left[ \begin{array}{ccccc}
	& \!\!\!a_{K+1} & \cdots & a_{K+P-1} & a_{K+P} \!\!\! \\
	\!\!\! b_K & \!\!\!b_{K+1} &  \cdots & b_{K+P-1} &Y \!\!\!\\
	& \!\!\!c_{K+1} & \cdots & c_{K+P-1} & c_{K+P} \!\!\!
	\end{array} \right].	\end{equation}
\end{Def}
%\begin{Rem}
	%Notice that remark after Definition~\ref{def:CF-sP} applies  also here.
%\end{Rem} 
\begin{Rem}
The periodicity condition is not preserved under equivalence transformations~\eqref{eq:CF-equiv}. In particular, in the simplest case of strictly periodic double-sided continued fraction \eqref{eq:CF-sP-Y} of even period $P$, the coefficients of its reduced form, given by Proposition~\ref{prop:CF-red}, satisfy the following quasi-periodicity conditions
\begin{gather*}
\bar{b}_{mP+2i} = \alpha^m \bar{b}_{2i} \beta^{m}, \qquad  \bar{b}_{mP+2i+1} = \beta^{-m} \bar{b}_{2i+1} \alpha^{-m} ,\qquad 0\leq 2i, 2i+1 < P, \\
\text{where} \qquad \alpha = a_1 c_2^{-1} \dots a_{P-1}c_P^{-1}, \qquad \beta = a_P^{-1} c_{P-1} \dots a_2^{-1} c_1 .
\end{gather*}	
\end{Rem}
The following result is a non-commutative analogue of the theorem of Euler~\cite{Euler}, which states that periodic continued fractions satisfy quadratic equations.
\begin{Prop}
	The periodic continued fraction $X$ given by \eqref{eq:CF-P-X}-\eqref{eq:CF-P-sP-Y} satisfies the second degree equation of the form \eqref{eq:CF-2-order-eq} with coefficients
	\begin{align*}
	A  = & \quad B_{K+P-1} c_{K+P}^{-1}  c_K a_{K-1}^{-1} \tilde{B}_{K-2} 
	 - B_{K+P-2} c_{K+P-1}^{-1} a_{K+P} a_K^{-1} \tilde{B}_{K-1} ,
	\\
	B  = & - A_{K+P-1} c_{K+P}^{-1}  
	c_K a_{K-1}^{-1} \tilde{B}_{K-2}  
	 + A_{K+P-2}  c_{K+P-1}^{-1} a_{K+P} a_K^{-1} \tilde{B}_{K-1} ,
	\\
	C  = & - B_{K +P -1}  c_{K+P}^{-1}  c_K a_{K-1}^{-1}  \tilde{A}_{K-2} 
	+ B_{K+P-2}  c_{K+P-1}^{-1} a_{K+P} a_K^{-1} \tilde{A}_{K-1} ,
	\\
	D = & \quad A_{K+P-1} c_{K+P}^{-1}  c_K a_{K-1}^{-1}  \tilde{A}_{K-2}  - 
	A_{K+P-2} c_{K+P-1}^{-1} a_{K+P} a_K^{-1} \tilde{A}_{K-1} .
	\end{align*}	
\end{Prop}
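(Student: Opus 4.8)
The plan is to realize the ultimately periodic fraction $X$ as a fractional-linear image of its strictly periodic tail $Y$ at \emph{two} different depths and then impose consistency. Since $Y$ is $P$-periodic, the head may be extended by one full period, so that $X$ equals the length-$(K+P)$ right simple fraction whose last $b$-coefficient is replaced by $Y$; by \eqref{eq:CF-2m2-AB} this reads
\begin{equation*}
X = \left( A_{K+P-2}c_{K+P-1}^{-1}a_{K+P} + A_{K+P-1}c_{K+P}^{-1}Y\right)\left( B_{K+P-2}c_{K+P-1}^{-1}a_{K+P} + B_{K+P-1}c_{K+P}^{-1}Y\right)^{-1}, \tag{$*$}
\end{equation*}
where $A_{K+P-j}$, $B_{K+P-j}$ are the full convergent data of $X$. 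On the other hand, applying the Serret relation \eqref{eq:CF-C*C} with $C=X$ and common tail $C^*=Y$ at depth $K$ gives
\begin{equation*}
Y = c_K a_{K-1}^{-1}\left( \tilde{A}_{K-2} - \tilde{B}_{K-2}X\right)\left( \tilde{B}_{K-1}X - \tilde{A}_{K-1}\right)^{-1}a_K. \tag{$**$}
\end{equation*}
Both relations carry the \emph{same} $Y$ precisely because of periodicity, and they are genuinely different maps (degrees $K$ and $K+P$), so their compatibility is the sought constraint on $X$.

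Next I would rewrite $(*)$ as
\begin{equation*}
\left( XB_{K+P-1} - A_{K+P-1}\right)c_{K+P}^{-1}\,Y + \left( XB_{K+P-2} - A_{K+P-2}\right)c_{K+P-1}^{-1}a_{K+P} = 0,
\end{equation*}
insert $Y$ from $(**)$, and clear the single inverse $(\tilde{B}_{K-1}X - \tilde{A}_{K-1})^{-1}$ by right-multiplying the whole identity by $a_K^{-1}(\tilde{B}_{K-1}X - \tilde{A}_{K-1})$. Writing $\alpha_2 = B_{K+P-2}c_{K+P-1}^{-1}a_{K+P}a_K^{-1}$, $\alpha_0 = A_{K+P-2}c_{K+P-1}^{-1}a_{K+P}a_K^{-1}$, $\beta_2 = B_{K+P-1}c_{K+P}^{-1}c_K a_{K-1}^{-1}$, $\beta_0 = A_{K+P-1}c_{K+P}^{-1}c_K a_{K-1}^{-1}$, the identity becomes
\begin{equation*}
\left( X\alpha_2 - \alpha_0\right)\left( \tilde{B}_{K-1}X - \tilde{A}_{K-1}\right) + \left( X\beta_2 - \beta_0\right)\left( \tilde{A}_{K-2} - \tilde{B}_{K-2}X\right) = 0.
\end{equation*}
Each factor containing $X$ is at most affine in $X$, so the expression is manifestly of second degree, of the shape $X(\cdots)X + (\cdots)X + X(\cdots) + (\cdots)=0$.

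Finally I would expand the two products and collect the four monomial types. Matching the coefficient between the two $X$'s with $A$, the left coefficient of $X$ with $B$, the right coefficient of $X$ with $C$, and the constant with $D$, one reads off exactly the negatives of the asserted coefficients (for instance the $X(\cdot)X$ term is $\alpha_2\tilde{B}_{K-1} - \beta_2\tilde{B}_{K-2} = -A$), so multiplying the relation by $-1$ yields \eqref{eq:CF-2-order-eq} with the stated $A,B,C,D$. No deep idea is needed beyond $(*)$ and $(**)$; the main obstacle is purely the non-commutative bookkeeping, namely keeping left/right positions straight while clearing the lone inverse and ensuring the groupings $c_{K+P}^{-1}c_K a_{K-1}^{-1}$ and $c_{K+P-1}^{-1}a_{K+P}a_K^{-1}$ stay intact, so that the convergent data of $X$ and the head data $\tilde{A}_\bullet,\tilde{B}_\bullet$ reassemble into the displayed products.
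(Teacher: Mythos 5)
Your proof is correct, but it takes a genuinely different route from the paper's own. The paper applies Proposition~\ref{prop:CF-sP-Y} to the shifted fraction, so that the tail $Y$ satisfies the quadratic \eqref{eq:Y-2-P} whose coefficients involve the \emph{tail} convergents $A^K_{P-j}$, $B^K_{P-j}$; it then substitutes the right-fraction relation \eqref{eq:Y-X-l} for the left occurrence of $Y$ and the left-fraction relation \eqref{eq:Y-X-r} for the right occurrence, clears two inverses (one on each side), and finally must invoke Corollary~\ref{cor:CF-splitting} to collapse the resulting products of head and tail convergents into the full convergents $A_{K+P-j}$, $B_{K+P-j}$ appearing in the statement. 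You bypass the tail quadratic entirely by unrolling the head through one full period: strict periodicity puts $Y$ into the $b_{K+P}$ slot of a depth-$(K+P)$ fraction, so your relation $(*)$ --- equivalently $(XB_{K+P-1}-A_{K+P-1})c_{K+P}^{-1}Y+(XB_{K+P-2}-A_{K+P-2})c_{K+P-1}^{-1}a_{K+P}=0$ --- is \emph{affine} in $Y$ and already carries the full convergents. A single substitution of the Serret relation \eqref{eq:CF-C*C} (your $(**)$; note in passing that the paper's \eqref{eq:Y-X-r} contains a typo, the factor $X$ after $\tilde{B}_{K-2}$ being omitted) and one cleared inverse then yield the four coefficients verbatim up to the global sign $-1$, which you correctly account for; I verified all four matchings, e.g.\ $\alpha_2\tilde{B}_{K-1}-\beta_2\tilde{B}_{K-2}=-A$ and $\beta_0\tilde{B}_{K-2}-\alpha_0\tilde{B}_{K-1}=-B$. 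What each approach buys: yours is shorter --- one substitution instead of two, one inverse to clear instead of two, no appeal to Corollary~\ref{cor:CF-splitting}, and the groupings $c_{K+P}^{-1}c_Ka_{K-1}^{-1}$ and $c_{K+P-1}^{-1}a_{K+P}a_K^{-1}$ never need reassembling --- while the paper's version makes explicit the structural fact that $X$ and $Y$ satisfy quadratics conjugate under the two Serret maps, and puts to work the splitting corollary it stated in advance for exactly this purpose. Both arguments rest on the same implicit invertibility assumptions (e.g.\ of $B_{K+P-2}c_{K+P-1}^{-1}a_{K+P}+B_{K+P-1}c_{K+P}^{-1}Y$), which are automatic in the formal Mal'cev--Neumann setting of the paper.
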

\begin{proof}
	By Proposition~\ref{prop:CF-sP-Y} the strictly periodic part $Y$ given by \eqref{eq:CF-P-sP-Y} satisfies the quadratic equation
	\begin{equation} \label{eq:Y-2-P}
	Y B^K_{P-1}c_{K+P}^{-1}  Y - A^K_{P-1} c_{K+P}^{-1}  Y + Y B^K_{P-2} c_{K+P-1}^{-1} a_{K+P} - A^K_{P-2}  c_{K+P-1}^{-1} a_{K+P} = 0.
	\end{equation}
	By Corollary~\ref{cor-CF-Serret} we have right and left fractional-linear transformations connecting $X$ and $Y$
	\begin{align} \label{eq:Y-X-l}
	Y & = ( X B_{K-1}c_K^{-1} - A_{K-1}c_K^{-1} )^{-1} (A_{K-2} c_{K-1}^{-1} a_{K} - X B_{K-2}c_{K-1}^{-1} a_K), \\ \label{eq:Y-X-r}
	Y & = ( c_K a_{K-1}^{-1} \tilde{A}_{K-2} - c_K a_{K-1}^{-1} \tilde{B}_{K-2})  (a_K^{-1} \tilde{B}_{K-1} X - a_{K}^{-1} \tilde{A}_{K-1})^{-1}.
	\end{align}
	Inserting into equation \eqref{eq:Y-2-P} the representation \eqref{eq:Y-X-l}  for $Y$ on the left, and \eqref{eq:Y-X-r}  for $Y$ on the right, and multiplying by denominators we obtain the following second order equation
	\begin{gather*}
	(A_{K-2}  - X B_{K-2}) c_{K-1}^{-1} a_{K} B^K_{P-1}c_{K+P}^{-1}  
	c_K a_{K-1}^{-1} (\tilde{A}_{K-2} - \tilde{B}_{K-2} X ) + \\
	- ( X B_{K-1}- A_{K-1} ) c_K^{-1}  A^K_{P-1} c_{K+P}^{-1}  c_K a_{K-1}^{-1} (\tilde{A}_{K-2} - \tilde{B}_{K-2} X ) + \\
	+ 
	(A_{K-2}  - X B_{K-2}) c_{K-1}^{-1} a_{K} B^K_{P-2} c_{K+P-1}^{-1} a_{K+P} a_K^{-1} 
	(\tilde{B}_{K-1} X - \tilde{A}_{K-1})+ \\
	- ( X B_{K-1}- A_{K-1}) c_K^{-1}  A^K_{P-2}  c_{K+P-1}^{-1} a_{K+P} a_K^{-1} 
	(\tilde{B}_{K-1} X - \tilde{A}_{K-1})= 0,
	\end{gather*}
	of the form \eqref{eq:CF-2-order-eq}. For example, the first coefficient reads
	\begin{align*}
	A  = & \; B_{K-2} c_{K-1}^{-1} a_{K} B^K_{P-1}c_{K+P}^{-1}  
	c_K a_{K-1}^{-1} \tilde{B}_{K-2}  +
	B_{K-1} c_K^{-1}  A^K_{P-1} c_{K+P}^{-1}  c_K a_{K-1}^{-1}  \tilde{B}_{K-2} + \\
	& - B_{K-2} c_{K-1}^{-1} a_{K} B^K_{P-2} c_{K+P-1}^{-1} a_{K+P} a_K^{-1} 
	\tilde{B}_{K-1} - B_{K-1} c_K^{-1}  A^K_{P-2}  c_{K+P-1}^{-1} a_{K+P} a_K^{-1} \tilde{B}_{K-1} ,
	\end{align*}
	but it can be simplified using Corollary~\ref{cor:CF-splitting}.
\end{proof}

\section{A weak analogue of the Galois theorem for continued fractions} \label{sec:Galois}
The final Section is devoted to a non-commutative analogue of the theorem published by  Galois~\cite{Newmann-Galois} at the age of 17, which states that when $y$ is a strictly periodic continued fraction then the second solution $y^\prime$ of its quadratic equation is such that $-(y^\prime)^{-1}$ is given by strictly periodic continued fraction with the same period and coefficients but in the reversed order
\begin{equation*}
y =[b_0, b_1, \dots , b_{P-1},y] \qquad \Rightarrow \qquad 
-(y^\prime)^{-1}= [b_{P-1}, \dots b_1, b_0, -(y^\prime)^{-1}].
\end{equation*}

To simplify calculations let us replace strictly periodic double-sided continued fraction \eqref{eq:CF-sP-Y} by the left sided one (notice that this is not an equivalence transformation)
\begin{equation} \label{eq:CF-sP-left}
\bar{Y} = \left[ \begin{array}{ccccc}
& \!\!\!\bar{a}_1 & \cdots & \bar{a}_{P-1}& \bar{a}_P \!\!\! \\
\!\!\! \bar{b}_0 & \!\!\!\bar{b}_1 & \cdots & \bar{b}_{P-1} & \bar{Y} \!\!\!
\end{array} \right] , \qquad \bar{Y} = Y c_1^{-1}, \qquad \bar{b}_k = b_k c_{k+1}^{-1}, \qquad \bar{a}_{k} = a_k c_{k+1}^{-1},	
\end{equation}
where indices are considered modulo $P$. As in Section~\ref{sec:CF-2x2} we replace the process of calculation of the strictly periodic continued fraction \eqref{eq:CF-sP-left} by solving $\bar{Y}_0 = \bar{Y}$ from the system
\begin{align} \label{eq:CF-XY-bar-1}
\bar{X}_k \bar{Y}_k & = \bar{a}_k, \\ \label{eq:CF-XY-bar-2}
\bar{Y}_k - \bar{X}_{k+1} & = \bar{b}_{k},
\end{align}
equations \eqref{eq:CF-system-XY-a}-\eqref{eq:CF-system-XY-b} in reduction \eqref{eq:CF-sP-left}.
This time, however, we are looking for another solution $\bar{Y}_0^\prime=Y^\prime c_1^{-1}$, different from 
$\bar{Y}_0$, of the related equation \eqref{eq:CF-2-order-eq}. Notice that this implies that all $\bar{X}_k^\prime$ and $\bar{Y}_k^\prime$,
calculated successively from the system
\begin{align} \label{eq:CF-XY-KP-1}
\bar{X}_k^\prime \bar{Y}_k^\prime & = \bar{X}_k \bar{Y}_k , \\ \label{eq:CF-XY-KP-2}
\bar{Y}_k^\prime - \bar{X}_{k+1}^\prime & = \bar{Y}_k - \bar{X}_{k+1}  ,
\end{align}
differ from the corresponding $\bar{X}_k$ and $\bar{Y}_k$. 
\begin{Rem}
	The system \eqref{eq:CF-XY-KP-1}-\eqref{eq:CF-XY-KP-2} up to a change of sign coincides with equations \eqref{eq:xy-k} defining the companion to the KP map, see Section~\ref{sec:comp-KP}, and was our original motivation to investigate the subject of non-commutative continued fractions.
\end{Rem}
The next result provides the corresponding analogs of equations~\eqref{eq:y-h} and \eqref{eq:hk-series}.
\begin{Lem}
Under the assumption $\bar{Y}_0^\prime\neq \bar{Y}_0$ the expression $(1 - \bar{Y}_0^\prime \bar{Y}_0^{-1})^{-1}$ is the following ultimately periodic continued fraction
\begin{equation} \label{eq:CF-H}
(1 - \bar{Y}_0^\prime \bar{Y}_0^{-1})^{-1} = 
\left[ \begin{array}{cccccc}
\!\!\! 1  & \!\!\! -\bar{Y}_{P-1} & \bar{X}_{P-1} -\bar{Y}_{P-2}  & \bar{X}_{P-2} -\bar{Y}_{P-3} & \bar{X}_{P-3} -\bar{Y}_{P-4} &  \cdots \!\!\! \\
& \!\!\!\bar{X}_{0} & \bar{X}_{P-1} \bar{Y}_{P-1} & \bar{X}_{P-2} \bar{Y}_{P-2}  &  
 \bar{X}_{P-3} \bar{Y}_{P-3} & \cdots \!\!\! 
\end{array} \right]
\end{equation}
given in terms of the initial solution of the system \eqref{eq:CF-XY-bar-1}-\eqref{eq:CF-XY-bar-2}. Moreover, the explicit form of the expression can be given as the following series
\begin{align} \label{eq:CF-H-series}
(1 - \bar{Y}_0^\prime \bar{Y}_0^{-1})^{-1}  &=  1 + \sum_{k=1}^\infty (-1)^{k}
\bar{Y}_{P-1}^{-1} \dots \bar{Y}_{P-k}^{-1} 
\bar{X}_{P-k+1} \bar{X}_{P-k} \dots \bar{X}_{0}  = \\ \nonumber
 = 
1 - \bar{Y}_{P-1}^{-1} \bar{X}_{0}  & + 
 \bar{Y}_{P-1}^{-1}\bar{Y}_{P-2}^{-1} \bar{X}_{P-1} \bar{X}_{0}  
 - \bar{Y}_{P-1}^{-1}\bar{Y}_{P-2}^{-1} \bar{Y}_{P-3}^{-1} \bar{X}_{P-2}\bar{X}_{P-1} \bar{X}_{0} 
 + \dots \quad .
\end{align}
\end{Lem}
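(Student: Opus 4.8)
The plan is to reduce everything to the intermediate function of the companion KP map and then to read off the two representations from a single recurrence. First I would record, as in the Remark following \eqref{eq:hk-series}, that the system \eqref{eq:CF-XY-KP-1}--\eqref{eq:CF-XY-KP-2} is the companion KP map \eqref{eq:xy-k} under the identification $x_k=-\bar{X}_k$, $y_k=\bar{Y}_k$ (and likewise for the primed quantities): $x_ky_k=x_k^\prime y_k^\prime$ is \eqref{eq:CF-XY-KP-1} and $y_k+x_{k+1}=y_k^\prime+x_{k+1}^\prime$ is \eqref{eq:CF-XY-KP-2}. Accordingly I introduce $h_k$ by $\bar{Y}_k^\prime+h_k^{-1}=\bar{Y}_k$, i.e. $h_k^{-1}=D_k:=\bar{Y}_k-\bar{Y}_k^\prime$, exactly as in \eqref{eq:y-h}. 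A one-line computation then expresses the target through $h_0$: from $\bar{Y}_0^\prime=\bar{Y}_0-h_0^{-1}$ one gets $1-\bar{Y}_0^\prime\bar{Y}_0^{-1}=h_0^{-1}\bar{Y}_0^{-1}$, hence $(1-\bar{Y}_0^\prime\bar{Y}_0^{-1})^{-1}=\bar{Y}_0h_0$. So it remains to expand $\bar{Y}_0h_0$ in two ways.

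Second, I would derive the basic recurrences directly from \eqref{eq:CF-XY-KP-1}--\eqref{eq:CF-XY-KP-2}. Writing $\bar{X}_k^\prime=\bar{X}_k-D_{k-1}$ (from \eqref{eq:CF-XY-KP-2}) and $\bar{Y}_k^\prime=\bar{Y}_k-D_k$ and substituting into \eqref{eq:CF-XY-KP-1} yields $(D_{k-1}-\bar{X}_k)D_k=D_{k-1}\bar{Y}_k$, equivalently $h_k=\bar{Y}_k^{-1}(1-h_{k-1}\bar{X}_k)$. Setting $G_k:=\bar{Y}_kh_k$ this reads $G_k=1-\bar{Y}_{k-1}^{-1}G_{k-1}\bar{X}_k$, and $G_0=\bar{Y}_0h_0$ is exactly the quantity we want. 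On the other hand, introducing the shift $\tilde{D}_k:=D_k-\bar{X}_{k+1}$ and using \eqref{eq:CF-XY-bar-1}--\eqref{eq:CF-XY-bar-2} in the form $\bar{X}_k\bar{Y}_k=\bar{a}_k$, $\bar{Y}_k-\bar{X}_{k+1}=\bar{b}_k$, the same identity becomes the continued-fraction recurrence $\tilde{D}_k=\bar{b}_k+\tilde{D}_{k-1}^{-1}\bar{a}_k$. This single manipulation is the engine for both representations.

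Third, the series \eqref{eq:CF-H-series} is then essentially immediate: substituting $x_k=-\bar{X}_k$, $y_k=\bar{Y}_k$ into \eqref{eq:hk-series}, evaluating at $k=0$, and reducing indices modulo $P$ turns $\bar{Y}_0h_0$ into $\sum_n(-1)^n\bar{Y}_{P-1}^{-1}\cdots\bar{Y}_{P-n}^{-1}\bar{X}_{P-n+1}\cdots\bar{X}_0$; equivalently one iterates $G_k=1-\bar{Y}_{k-1}^{-1}G_{k-1}\bar{X}_k$ backwards from $k=0$. Convergence in the Mal'cev--Neumann topology follows from periodicity, since the order of the $n$-th summand grows with $n$. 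For the continued fraction \eqref{eq:CF-H} I would unfold the recurrence $\tilde{D}_k=\bar{b}_k+\tilde{D}_{k-1}^{-1}\bar{a}_k$ using the periodicity of $\bar{a}_k,\bar{b}_k$; this produces a strictly periodic right-sided fraction with coefficients taken in reversed order, which is the non-commutative shadow of the Galois reversal. It then remains to repackage $G_0=\bar{Y}_0(\tilde{D}_0+\bar{X}_1)^{-1}$ into the stated left-sided form and to rewrite the coefficients via $\bar{a}_k=\bar{X}_k\bar{Y}_k$, $\bar{b}_k=\bar{Y}_k-\bar{X}_{k+1}$, producing the entries $-\bar{Y}_{P-1}$, $\bar{X}_0$, $\bar{X}_{P-j}-\bar{Y}_{P-j-1}$, $\bar{X}_{P-j+1}\bar{Y}_{P-j+1}$.

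The main obstacle is precisely this last repackaging. Unfolding $\tilde{D}$ gives a right-sided fraction in the coefficients $\bar{a},\bar{b}$, whereas \eqref{eq:CF-H} is a left-sided fraction in the solution variables $\bar{X},\bar{Y}$, so the prefactor $\bar{Y}_0(\,\cdot\,)^{-1}$ together with the shift $D_0=\tilde{D}_0+\bar{X}_1$ must be absorbed by an equivalence transformation of the type studied in Section~\ref{sec:CF-equiv}. The delicate points are the correct handling of the index reversal modulo $P$ and the fact that the first coefficients $a_1=-\bar{Y}_{P-1}$, $b_1=\bar{X}_0$ deviate from the periodic pattern $a_j=-\bar{b}_{P-j}$, $b_j=\bar{a}_{P-j+1}$ ($j\ge 2$) by a term $\bar{X}_0$; this deviation is exactly what makes \eqref{eq:CF-H} ultimately rather than strictly periodic, and checking that the bookkeeping closes up is the only non-formal step in the argument.
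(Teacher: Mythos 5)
Your proposal is correct, and for the continued-fraction half it is essentially the paper's own argument written in different variables: your identity $(D_{k-1}-\bar{X}_k)D_k=D_{k-1}\bar{Y}_k$ with $D_k=\bar{Y}_k-\bar{Y}_k^\prime$ is exactly the paper's recurrence $\bar{Y}_k^\prime=(\bar{X}_k-\bar{Y}_{k-1}+\bar{Y}_{k-1}^\prime)^{-1}\bar{X}_k\bar{Y}_k$, and unfolding it backwards modulo $P$, with the $k=0$ and $k=P-1$ equations of \eqref{eq:CF-XY-KP-1}--\eqref{eq:CF-XY-KP-2} supplying the aperiodic head, is precisely how the paper obtains \eqref{eq:CF-H}. (Incidentally, your $h$-based computation confirms that the paper's displayed first step should read $1+(\bar{Y}_{P-1}^\prime-\bar{Y}_{P-1})^{-1}\bar{X}_0$; the inverse is missing there.) Where you genuinely diverge is the series \eqref{eq:CF-H-series}: the paper derives it \emph{after} the continued fraction, by an induction via \eqref{eq:CF-rec-A-rev}--\eqref{eq:CF-rec-B-rev} yielding the explicit numerators and denominators \eqref{eq:CF-Gal-A}--\eqref{eq:CF-Gal-B} of the convergents of \eqref{eq:CF-H}, whereas you obtain it directly and independently of the fraction by iterating $G_k=1-\bar{Y}_{k-1}^{-1}G_{k-1}\bar{X}_k$, equivalently by specializing \eqref{eq:hk-series} under $x_k=-\bar{X}_k$, $y_k=\bar{Y}_k$, which the Remark preceding the Lemma licenses. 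Your route is shorter for the series; the paper's buys the formulas \eqref{eq:CF-Gal-A}--\eqref{eq:CF-Gal-B}, reused in the subsequent Remark together with the Euler--Minding expansion \eqref{eq:CF-exp-rev}. Two small repairs to your write-up: by the conventions of \eqref{eq:CF-a} the fraction \eqref{eq:CF-H} is right-sided (unital $a$-symbols), not left-sided; and the ``repackaging'' you flag as the delicate last step needs no equivalence transformation from Section~\ref{sec:CF-equiv} at all. Indeed, from $-\tilde{D}_k=-\bar{b}_k+(-\tilde{D}_{k-1})^{-1}\bar{a}_k$ together with $-\bar{b}_{P-j}=\bar{X}_{P-j+1}-\bar{Y}_{P-j}$ and $\bar{a}_{P-j+1}=\bar{X}_{P-j+1}\bar{Y}_{P-j+1}$ the entries of \eqref{eq:CF-H} appear verbatim, while the prefactor $\bar{Y}_0$ cancels by one application of your engine identity at $k=0$: $\bar{Y}_0D_0^{-1}=D_{P-1}^{-1}(D_{P-1}-\bar{X}_0)=1-D_{P-1}^{-1}\bar{X}_0$, which is exactly the aperiodic head $b_0=1$, $b_1=-\bar{Y}_{P-1}$, $c_1=\bar{X}_0$ and accounts for the ultimate (rather than strict) periodicity without any further bookkeeping.
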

\begin{proof}
	Equations for $k=0$ and $k=P-1$ of the system \eqref{eq:CF-XY-KP-1}-\eqref{eq:CF-XY-KP-2} give
	\begin{equation*}
(1 - \bar{Y}_0^\prime \bar{Y}_0^{-1} )^{-1} =  1 + ( \bar{Y}_{P-1}^{\prime} - \bar{Y}_{P-1} )\bar{X}_{0}.
	\end{equation*}
	Then successive application of the recurrence, established by \eqref{eq:CF-XY-KP-1}-\eqref{eq:CF-XY-KP-2}, 
	\begin{equation*}
	\bar{Y}_k^\prime = ( \bar{X}_{k} -\bar{Y}_{k-1} + \bar{Y}_{k-1}^\prime)^{-1} \bar{X}_{k} \bar{Y}_{k}
	\end{equation*}
	gives the continued fraction form \eqref{eq:CF-H}.
	
	By induction, using equations \eqref{eq:CF-rec-A-rev}-\eqref{eq:CF-rec-B-rev}, one can give the following explicit form of the nominators and denominators of the left simple fraction decompositions \eqref{eq:CF-BA} of convergents of the continued fraction~\eqref{eq:CF-H}
	\begin{align} \label{eq:CF-Gal-A}
	\tilde{A}_k &  = \bar{X}_{P-k+1}\bar{X}_{P-k+2}\dots \bar{X}_{0} -
	\bar{Y}_{P-k}\bar{X}_{P-k+2}\dots \bar{X}_{0} + \dots \\
	\nonumber
	 \dots & + (-1)^{k-1} \bar{Y}_{P-k}\dots \bar{Y}_{P-2}\bar{X}_{0} + (-1)^k
	\bar{Y}_{P-k}\dots \bar{Y}_{P-2}\bar{Y}_{P-1} , \\
	\label{eq:CF-Gal-B}
	\tilde{B}_k &= (-1)^k
	\bar{Y}_{P-k}\dots \bar{Y}_{P-2}\bar{Y}_{P-1} ,	
	\end{align}
	which gives the series \eqref{eq:CF-H-series}.
\end{proof}
\begin{Rem}
The series \eqref{eq:CF-H-series} can be also obtained from the Euler--Minding type expansion \eqref{eq:CF-exp-rev} and expression	\eqref{eq:CF-Gal-B}.
\end{Rem}
From the continued fraction representation \eqref{eq:CF-H} one can find the continued fraction representation of $\bar{Y}^\prime_0 = Y^\prime c_1^{-1}$
\begin{equation*}
\bar{Y}^\prime_0 =   
\left[ \begin{array}{ccccc}
\!\!\! 0  & \!\!\! \bar{X}_{0} -\bar{Y}_{P-1} & \bar{X}_{P-1} -\bar{Y}_{P-2}  & \bar{X}_{P-2} -\bar{Y}_{P-3} &   \cdots \!\!\! \\
& \!\!\!\bar{X}_{0} \bar{Y}_{0} & \bar{X}_{P-1} \bar{Y}_{P-1} & \bar{X}_{P-2} \bar{Y}_{P-2}  &  \cdots \!\!\! 
\end{array} \right] =     
\left[ \begin{array}{ccccc}
\!\!\! 0  & \!\!\!  -\bar{b}_{P-1} &  -\bar{b}_{P-2}  &  -\bar{b}_{P-3} &   \cdots \!\!\! \\
& \!\!\!\bar{a}_{P} & \bar{a}_{P-1} & \bar{a}_{P-2} &  \cdots \!\!\! 
\end{array} \right] .
\end{equation*}
Going back to the initial coefficients of the strictly periodic continued fraction \eqref{eq:CF-sP-Y} we obtain that 
$Z= -a_P (Y^\prime)^{-1} c_P$ is strictly periodic continued fraction of the form
\begin{equation} 
Z= -a_P (Y^\prime)^{-1} c_P = \left[ \begin{array}{ccccc}
& \!\!\!c_{P-1} & \cdots & c_1 & c_P \!\!\! \\
\!\!\! b_{P-1}  & \!\!\!  b_{P-2}  & \cdots &
b_{0}    & Z \!\!\!\\
& \!\!\!a_{P-1} & \cdots  &a_{1} & a_P \!\!\!
\end{array} \right] ,	\end{equation}	
which implies the following final result.
\begin{Prop}
	When $Y$ is the strictly periodic continued fraction \eqref{eq:CF-sP-Y} then
	$Y^\prime$ given by the following strictly periodic continued fraction
	\begin{equation} 
	-(Y^\prime)^{-1} = \left[ \begin{array}{ccccc}
	& \!\!\!a_P^{-1} & \cdots & a_2^{-1} & a_1^{-1} \!\!\! \\
	\!\!\! a_P^{-1} b_{P-1} c_P^{-1} & \!\!\!a_{P-1}^{-1} b_{P-2} c_{P-1}^{-1} & \cdots &
	a_{1}^{-1} b_{0} c_{1}^{-1}  & -(Y^\prime)^{-1} \!\!\!\\
	& \!\!\!c_P^{-1} & \cdots  &c_2^{-1} & c_1^{-1} \!\!\!
	\end{array} \right] .	\end{equation}	
	is another solution of the corresponding equation \eqref{eq:CF-2-order-eq}-\eqref{eq:CF-2-order-eq-ABCD}.
\end{Prop}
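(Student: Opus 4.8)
The plan is to regard the statement as the final bookkeeping step: all of the substantive work has already been carried out in the preceding Lemma and in the derivation that expresses $Z = -a_P (Y^\prime)^{-1} c_P$ as the displayed strictly periodic continued fraction in the reversed coefficients. What remains is to pass from that fraction for $Z$ to the claimed fraction for $-(Y^\prime)^{-1}$, using the single relation $-(Y^\prime)^{-1} = a_P^{-1} Z c_P^{-1}$ together with one equivalence transformation from Section~\ref{sec:CF-equiv}.

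First I would record the effect of the outer factors. Writing the fraction for $Z$ in its defining form $Z = b_{P-1} + c_{P-1}(\,\cdots\,)^{-1} a_{P-1}$ and multiplying on the left by $a_P^{-1}$ and on the right by $c_P^{-1}$ changes only the constant term, the outermost $a$-coefficient and the outermost $c$-coefficient, leaving the entire interior untouched; concretely the new data are $b_0 = a_P^{-1} b_{P-1} c_P^{-1}$, outer $a$-coefficient $a_P^{-1} c_{P-1}$ and outer $c$-coefficient $a_{P-1} c_P^{-1}$, with all deeper coefficients still those of $Z$. To absorb these and simultaneously normalize the whole tower I would then apply the equivalence transformation \eqref{eq:CF-equiv} to this length-$P$ fraction with the sequence $\gamma_0 = \delta_0 = 1$ and $\gamma_k = a_{P-k}^{-1}$, $\delta_k = c_{P-k}^{-1}$ for $k = 1,\dots,P$ (indices read modulo $P$). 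Feeding this into the transformation rule \eqref{eq:CF-equiv} one checks directly that the top row becomes $a_{P-k+1}^{-1}$, the bottom row becomes $c_{P-k+1}^{-1}$, and the middle coefficients become $a_{P-k}^{-1} b_{P-1-k} c_{P-k}^{-1}$, which is exactly the claimed pattern; the decisive point is that the self-referential slot $b_P = Z$ is sent to $\gamma_P Z \delta_P = a_P^{-1} Z c_P^{-1} = -(Y^\prime)^{-1}$, so the transformed fraction closes up into a genuine strictly periodic continued fraction of period $P$ representing $-(Y^\prime)^{-1}$.

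Second, I would dispatch the claim that $Y^\prime$ solves \eqref{eq:CF-2-order-eq}--\eqref{eq:CF-2-order-eq-ABCD}. This is immediate from the construction: $\bar{Y}_0^\prime = Y^\prime c_1^{-1}$ was chosen from the outset as the second solution, distinct from $\bar{Y}_0$, of the quadratic \eqref{eq:CF-2-order-eq} attached to $\bar{Y}_0$. Since $Y \mapsto Y c_1^{-1}$ is a fixed invertible bijection of the division ring, it carries the two roots of the equation attached to $Y$ to the two roots of the equation attached to $\bar{Y}_0$; hence $Y^\prime$ is precisely the second root of \eqref{eq:CF-2-order-eq}--\eqref{eq:CF-2-order-eq-ABCD}.

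I expect the main obstacle to be the reconciliation between the value-changing outer multiplication and the value-preserving equivalence transformation, and in particular the boundary bookkeeping. One must keep $\gamma_0 = \delta_0 = 1$ at the outer end (so that the constant term is left exactly as computed above) while using $\gamma_P = a_P^{-1}$, $\delta_P = c_P^{-1}$ at the self-referential end; there is no contradiction here, because the transformation is applied to one period of a finite length-$P$ fraction with a self-referential final coefficient, and not to the infinite periodic object as a whole (for which, as the Remark preceding Euler's theorem notes, equivalence transformations preserve only quasi-periodicity). Verifying that the modular indices line up so that the reversed fraction genuinely reproduces itself after one period is the only place where care is needed; everything else is a routine unwinding of \eqref{eq:CF-equiv}.
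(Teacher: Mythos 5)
Your proposal is correct and takes essentially the same route as the paper: the paper's own ``proof'' of this Proposition is precisely the implicit passage (``which implies the following final result'') from the already-derived strictly periodic fraction for $Z=-a_P(Y^\prime)^{-1}c_P$ to the displayed fraction for $-(Y^\prime)^{-1}=a_P^{-1}Zc_P^{-1}$, and your outer multiplication absorbed by the equivalence transformation with $(\gamma_k,\delta_k)=(a_{P-k}^{-1},c_{P-k}^{-1})$ (indices mod $P$, so that the self-referential slot transforms to $\gamma_P Z\delta_P=-(Y^\prime)^{-1}$ and the fraction closes up) is exactly the right way to make that step explicit; I verified the coefficient bookkeeping and it matches the claimed pattern at every position. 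Your dispatch of the quadratic claim also mirrors the paper, where $\bar{Y}_0^\prime=Y^\prime c_1^{-1}$ is by construction the second root of the (right-translated) equation \eqref{eq:CF-2-order-eq}.
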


\begin{Cor}
	On the level of strictly periodic continued fractions with unital $a$- and $c$-coefficients 
	\begin{equation*} % \label{eq:CF-sP-AC-1}
	Y = b_0 + ( b_1 +  ( b_2 + \dots +  ( b_{P-1} +  Y^{-1})^{-1}  \dots )^{-1}  )^{-1}
	= [b_0,b_1 , \dots , b_{P-1},Y],
	\end{equation*} 
	we have
	\begin{equation*} 
	-(Y^\prime)^{-1} = b_{P-1} + ( b_{P-2} +  ( b_{P-3} + \dots +  ( b_{0} -  Y^\prime)^{-1}  \dots )^{-1}  )^{-1}=
	[b_{P-1} , b_{P-2}, \dots , b_0, -(Y^\prime)^{-1}].	
	\end{equation*} 	
\end{Cor}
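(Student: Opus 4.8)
The plan is to reach the second root $Y'$ by building, from the given periodic data, a companion ``primed'' orbit and reading off its continued fraction, rather than by solving \eqref{eq:CF-2-order-eq} head on. First I would pass to the left-sided reduction \eqref{eq:CF-sP-left}, setting $\bar{Y}_0=Yc_1^{-1}$, $\bar{a}_k=a_kc_{k+1}^{-1}$, $\bar{b}_k=b_kc_{k+1}^{-1}$ with indices taken modulo $P$, and encode the fixed-point problem in the system \eqref{eq:CF-XY-bar-1}--\eqref{eq:CF-XY-bar-2} for the pair $(\bar{X}_k,\bar{Y}_k)$. A second solution $\bar{Y}_0'=Y'c_1^{-1}\neq\bar{Y}_0$ of \eqref{eq:CF-2-order-eq} then generates, through \eqref{eq:CF-XY-KP-1}--\eqref{eq:CF-XY-KP-2}, a full primed orbit sharing with the original orbit all the products $\bar{X}_k\bar{Y}_k$ and all the differences $\bar{Y}_k-\bar{X}_{k+1}$.

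The core of the argument is the preceding Lemma, which I would apply to express $(1-\bar{Y}_0'\bar{Y}_0^{-1})^{-1}$ as the ultimately periodic left-sided continued fraction \eqref{eq:CF-H}. Solving this relation for $\bar{Y}_0'$ itself yields a left-sided continued fraction with vanishing leading term and with successive data $\bar{X}_0-\bar{Y}_{P-1},\ \bar{X}_{P-1}-\bar{Y}_{P-2},\dots$ over $\bar{X}_0\bar{Y}_0,\ \bar{X}_{P-1}\bar{Y}_{P-1},\dots$. At this point I would feed in the defining relations \eqref{eq:CF-XY-bar-1}--\eqref{eq:CF-XY-bar-2} together with the $P$-periodicity of the orbit, rewriting each $\bar{X}_{P-k}-\bar{Y}_{P-k-1}$ as $-\bar{b}_{P-k-1}$ and each $\bar{X}_{P-k}\bar{Y}_{P-k}$ as $\bar{a}_{P-k}$. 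This turns the continued fraction for $\bar{Y}_0'$ into one built from the original coefficients $\bar{b}$ and $\bar{a}$ listed in reverse order, which is the mechanism producing the reversal asserted in the statement.

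It then remains to undo the reduction \eqref{eq:CF-sP-left} and to pass from $\bar{Y}_0'=Y'c_1^{-1}$ to $-(Y')^{-1}$. Restoring $\bar{a}_k=a_kc_{k+1}^{-1}$ and $\bar{b}_k=b_kc_{k+1}^{-1}$ re-expands the reversed left-sided fraction as a genuinely double-sided one, namely the fraction for $Z=-a_P(Y')^{-1}c_P$. Writing finally $-(Y')^{-1}=a_P^{-1}Zc_P^{-1}$, I would absorb the two outer factors into the coefficients: the identity $(a^{-1}Xc^{-1})^{-1}=cX^{-1}a$ lets the factors $a_P^{-1}$ and $c_P^{-1}$ telescope level by level through the fraction, exactly as in the equivalence transformation of Section~\ref{sec:CF-equiv}, and the result is precisely the displayed fraction with coefficients $a_k^{-1}$, $a_k^{-1}b_{k-1}c_k^{-1}$, $c_k^{-1}$ in reverse order. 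Since $\bar{Y}_0'$ was chosen as the second root of \eqref{eq:CF-2-order-eq}, the $Y'$ so produced solves that equation, and the specialization $a_k=c_k=1$ collapses the three coefficient rows to the simple reversed fraction $[b_{P-1},\dots,b_0,-(Y')^{-1}]$, giving the Corollary.

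I expect the main obstacle to be the index bookkeeping rather than any single hard identity: one must track the reversal of order simultaneously with the left and right conjugations by $a_k^{-1}$ and $c_k^{-1}$ through both the reduction \eqref{eq:CF-sP-left} and its inverse, and verify that the outer conjugation $a_P^{-1}(\,\cdot\,)c_P^{-1}$ is genuinely absorbed by a quasi-periodic equivalence transformation rather than destroying periodicity. A secondary point deserving care is that \eqref{eq:CF-2-order-eq} is a two-sided quadratic, so one should confirm that the constructed $-(Y')^{-1}$ corresponds to the second root of exactly this equation and not to a transposed companion.
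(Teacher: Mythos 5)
Your proposal is correct and takes essentially the same route as the paper: the left-sided reduction \eqref{eq:CF-sP-left}, the Lemma expressing $(1-\bar{Y}_0^\prime\bar{Y}_0^{-1})^{-1}$ as the continued fraction \eqref{eq:CF-H}, the read-off of $\bar{Y}_0^\prime$ with the coefficients rewritten in reverse via \eqref{eq:CF-XY-bar-1}--\eqref{eq:CF-XY-bar-2}, the passage through $Z=-a_P(Y^\prime)^{-1}c_P$ back to $-(Y^\prime)^{-1}$, and finally the specialization $a_k=c_k=1$. The only step the paper leaves implicit --- absorbing the outer factors $a_P^{-1}$ and $c_P^{-1}$ into the fraction --- is exactly the level-by-level conjugation you describe.
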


\section{Conclusion}
Motivated by the theory of non-commutative integrable discrete systems we studied non-commutative continued fractions in their generic double-sided form. We presented the corresponding versions of the most pertinent results known from the classical theory. We think that the non-commutative perspective sheds new light on old well known facts and broadens applications of the theory.  
However, in some cases we gave only "weak" version of a result, meaning that we show only implication in one direction. For example, we do not have full version of the Lagrange theorem on quadratic irrationals but we present only Euler's part of the theorem. Notice however, that the simplest case of power series with single indeterminate~\cite{Schmidt,Khrushchev}, which in our approach corresponds to the free group with one generator, suggests that the reverse implication is valid only under certain additional conditions. This aspect is presently under investigation.

The present paper shows that it is possible to transfer to the fully non-commutative level several results on relation of integrable systems and continued fractions. The results are valid for arbitrary division ring. An interesting question is what happens on the intermediate level between full non-commutativity and commutativity --- what commutation relations can be imposed without destroying integrability? What additional properties appear when such integrable constraints are imposed? Examples presented in \cite{Sergeev-Q2+1,Doliwa-Sergeev,Doliwa-Kashaev} show intriguing link between the ultra-locality principle and the Weyl commutation relations. In general, the rings of fractions of the standard quantum algebras may be interesting objects to study in this context.

\section*{Acknowledgments}
The author would like to thank anonymous referees for careful reading of the manuscript and for their constructive remarks which helped to improve presentation. 
The paper was supported by National Science Centre, Poland, under grant 2015/19/B/ST2/03575 \emph{Discrete integrable systems -- theory and applications}. The author would like to thank organizers of the workshop \emph{ISLAND V: Integrable systems, special functions and combinatorics} (23-28 June, Sabhal Mor Ostaig, Isle of Skye) for invitation and support.

\bibliographystyle{amsplain}

\providecommand{\bysame}{\leavevmode\hbox to3em{\hrulefill}\thinspace}

\end{document}